\documentclass[11pt,reqno]{amsart}
\usepackage[T1]{fontenc}
\usepackage[utf8]{inputenc}
\usepackage{mathpazo}
\usepackage{amsmath,amssymb,amsthm,mathtools}
\usepackage[margin=1.05in]{geometry}
\usepackage{microtype}
\usepackage{xcolor}
\usepackage{graphicx}
\graphicspath{{figures/}}
\usepackage{booktabs}
\usepackage{needspace}
\usepackage{tikz}
\usetikzlibrary{arrows.meta}
\usepackage[colorlinks=true,linkcolor=blue!45!black,citecolor=blue!45!black,urlcolor=blue!45!black]{hyperref}
\hypersetup{pdftitle={Boundary parity in the mirror model on the Manhattan lattice: certified numerical experiments},pdfauthor={}}
\numberwithin{equation}{section}
\newtheorem{theorem}{Theorem}[section]
\newtheorem{proposition}[theorem]{Proposition}
\newtheorem{lemma}[theorem]{Lemma}
\newtheorem{corollary}[theorem]{Corollary}
\theoremstyle{definition}

\theoremstyle{remark}

\newcommand{\Z}{\mathbb Z}
\newcommand{\R}{\mathbb R}

\newcommand{\Pp}{\mathbb P_p}

\newcommand{\ind}{\mathbf1}
\newcommand{\ff}{\mathfrak f}

\newcommand{\cM}{\mathcal M}
\newcommand{\cR}{\mathcal R}
\newcommand{\one}{\mathbf1}
\DeclareMathOperator{\rank}{rank}
\DeclareMathOperator{\tr}{tr}

\DeclareMathOperator{\wind}{wind}
\DeclareMathOperator{\diag}{diag}

\newcommand{\FirstAuthor}{Jian Gu}
\newcommand{\FirstAffiliation}{ESSEC}
\newcommand{\SecondAuthor}{Qin Hao}
\newcommand{\SecondAffiliation}{Polytechnic Institute of Paris}
\title[Boundary parity on the Manhattan lattice]{Boundary parity in the mirror model on the Manhattan lattice}
\author[\FirstAuthor]{\FirstAuthor}
\address{\FirstAffiliation}
\author[\SecondAuthor]{\SecondAuthor}
\address{\SecondAffiliation}
\makeatletter
\renewcommand{\@setauthors}{%
  \begingroup\trivlist\centering\footnotesize
  \@topsep30\p@\relax\advance\@topsep by -\baselineskip
  \item\relax
  \begin{minipage}[t]{.46\textwidth}\centering
    {\scshape\FirstAuthor\par}\smallskip
    {\normalfont\footnotesize\FirstAffiliation\par}
  \end{minipage}\hfill
  \begin{minipage}[t]{.46\textwidth}\centering
    {\scshape\SecondAuthor\par}\smallskip
    {\normalfont\footnotesize\SecondAffiliation\par}
  \end{minipage}
  \endtrivlist\endgroup}
\renewcommand{\@setaddresses}{}
\makeatother
\date{}
\subjclass[2020]{60K35, 82B41, 05C50, 65G20}
\keywords{Manhattan mirror model, boundary parity, random mirrors, tensor contraction,
verified computation, signed scattering}

\begin{document}
\raggedbottom
\begin{abstract}
We study a finite-domain event for the mirror model on the Manhattan
lattice: every boundary-to-boundary trajectory crosses a marked
connector an even number of times, while internal cycles are
unrestricted. Our main theorem gives a finite-scale confinement
criterion: if this event's probability reaches a universal
threshold in one even two-square rectangle, then almost surely every
trajectory in the plane is periodic. We construct an exactly normalized
three-color tensor representation and prove deterministic bounds for
approximate contractions using the signed partial-permutation structure
of directed continuations. These tools turn numerical evaluations into
rigorous finite-volume statements. At mirror density \(p=2/5\), we
prove that the boundary-parity probability is nonmonotone in the
rectangle width, with a decrease followed by a certified increase.
These values remain below the sufficient confinement threshold and
do not resolve planar localization at this density.
Further experiments track how the measured finite-size minimum varies
with density, compare matching-based and entrywise error bounds, and
exhibit endpoint-sensitive and cooperative responses to local mirror
changes.
\end{abstract}
\maketitle

\section{Introduction}\label{sec:intro}

The Manhattan mirror model is a deterministic routing system in an
independent random environment. Horizontal streets point east on even
rows and west on odd rows; vertical streets point north on even columns
and south on odd columns. Independently at each vertex, a mirror is
present with probability \(p\in[0,1]\). A ray continues along its
current axis at a vacant vertex and changes axis at a mirror. Repeated
passages through a physical site use the same Bernoulli variable.

Finite-domain calculations offer a way to study this dependence before
long trajectories can be controlled analytically. We investigate a
boundary-parity probability that has a rigorous connection to planar
confinement. An exact tensor representation and deterministic residual
bounds allow us to certify its finite-size behavior. The main theorem
states the confinement criterion, and the principal numerical result
proves a change of finite-size trend at \(p=2/5\).

\subsection{Main results}

Fix an even integer \(L\ge2\), put \(c=L/2\), and set
\begin{equation}\label{eq:geometry}
 D_L=\{0,\ldots,2L-1\}\times\{0,\ldots,L-1\}.
\end{equation}
The horizontal connector \(\alpha_L\) joins the centers
\begin{equation}\label{eq:connector}
 u_L=(c-\tfrac12,c-\tfrac12),\qquad
 v_L=(3c-\tfrac12,c-\tfrac12)
\end{equation}
of the two square halves. Boundary paths are the complete directed
routes in \(D_L\) from an incoming boundary leg to an outgoing
boundary leg; the remaining routes are internal cycles. Define
\begin{equation}\label{eq:observable}
 H_L=\{\text{every boundary path crosses }\alpha_L
                         \text{ an even number of times}\},\qquad
 h_L(p)=\mathbb P_p(H_L).
\end{equation}
Intersections are counted with traversal multiplicity, and the parity
condition is imposed separately on every boundary path. Internal cycles
are unrestricted. Section~\ref{sec:model} gives the directed-state
formulation and Figure~\ref{fig:connector} illustrates the geometry.

\Needspace{7\baselineskip}
\begin{theorem}[Finite-scale parity criterion]\label{thm:criterion}
If, for some finite even \(L\),
\begin{equation}\label{eq:criterion}
 h_L(p)\ge \frac{8457}{10000},
\end{equation}
then almost surely every trajectory of the Manhattan mirror model
at parameter \(p\) is periodic.
\end{theorem}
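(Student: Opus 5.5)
The plan is a finite-size criterion followed by a renormalization (block-percolation) argument. The event \(H_L\) is only interesting because of a topological fact about the connector, so the first step is to establish that fact.

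\emph{Step 1 (topological blocking).} Suppose \(H_L\) holds, and consider an infinite (non-periodic) trajectory of the planar model that enters \(D_L\). Each maximal excursion of this trajectory inside \(D_L\) is a boundary path, because the configuration inside \(D_L\) determines the route.
\begin{itemize}
\item Consequently, every excursion crosses \(\alpha_L\) an even number of times.
\item Call \(D_L\) \emph{closed} for a pair of centres \(u_L,v_L\) if this parity holds for every excursion.
\item In that case, the mod-2 intersection number of \(\alpha_L\) with the planar trajectory is additive over excursions.
\item It is therefore even for any trajectory that enters and leaves.
\end{itemize}
Now take a circuit of translated or rotated copies of such rectangles, overlapping in their square halves (the "two-square" geometry is designed exactly for this). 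The connectors of a chain of closed blocks concatenate into a closed polygonal loop \(\gamma\). Since every excursion meets each piece evenly, any bi-infinite trajectory crosses \(\gamma\) an even number of times. By the Jordan curve theorem, a trajectory that starts inside \(\gamma\) must then stay inside, since an escaping trajectory would cross an odd number of times. A trajectory confined to a finite region in a deterministic reversible routing system must be periodic, because directed states are finite and the dynamics is a bijection.

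\emph{Step 2 (comparison with percolation).} Next, declare each rectangle in a rescaled lattice (horizontal and vertical copies at scale \(L\), with even \(L\) preserving the Manhattan orientations under translation by \((L,0)\) and \((0,L)\)) open if its version of \(H_L\) holds. Rotations need care, since the Manhattan orientation is not rotation invariant. I would use the reflection symmetries of the lattice that preserve the law, and argue that the vertical version has the same probability \(h_L(p)\).

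Blocks sharing no vertices are independent, and overlapping blocks give finite-range dependence. I would then invoke a Liggett--Schonmann--Stacey domination or a direct Peierls count. An open circuit around the origin, on every scale, exists almost surely once the block probability exceeds an explicit threshold \(1-\epsilon_0\). The constant \(8457/10000\) would be the explicit output of this step, for example via a Peierls sum over closed dual paths with dependence range \(k\). One needs \(\sum_n N(n)(1-h)^{n/k}<1\) for a suitable counting function, or the Balister--Bollobás--Walters 1-dependent bound for the two-square bond model. With circuits surrounding every point almost surely, Step 1 gives periodicity of every trajectory, and a countable union handles all starting points.

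\emph{Main obstacle.} The hardest part will be Step 2's constant. Matching the specific number \(0.8457\) requires a sharp 1-dependent percolation threshold rather than a crude Peierls estimate. I would try to realize the rectangles as edges of a 1-dependent bond model on \(\Z^2\) and cite the best known bound for that model. A secondary subtlety is the concatenation in Step 1. The connectors of adjacent blocks must join at shared centres, and the traversal-multiplicity convention must make the parity additive even when an excursion passes through overlapping regions of two blocks. I would handle this by working with mod-2 homology of \(\gamma\) and noting that parities add mod 2.
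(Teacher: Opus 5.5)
Your plan is essentially the paper's proof. It uses parity blocking plus odd winding of the concatenated connectors (Lemma~\ref{lem:blocking} with \(m=2\)), a 1-independent coarse bond model on \(L\times L\) tiles whose edges are the two-tile rectangles, and the surrounding-circuit theorem of Balister--Johnston--Savery--Scott, which is exactly where \(0.8457\) comes from; as I recall, the Balister--Bollob\'as--Walters bound is only \(0.8639\), and a Peierls sum or Liggett--Schonmann--Stacey would give weaker thresholds. Your reflection idea also works and is a valid alternative to the paper's rotation-plus-reversal: \((x,y)\mapsto(y,x)\) preserves every Manhattan street direction and carries the horizontal witness onto the vertical one. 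The one step to state correctly is that evenness must be applied to the forward half-trajectory from a starting state outside every witness rectangle, so choose the open circuit around a large neighborhood of the start, as the paper does; an even crossing count for a full bi-infinite trajectory holds automatically and yields no contradiction.
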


The threshold in \eqref{eq:criterion} comes from the surrounding-circuit
theorem of Balister, Johnston, Savery, and Scott~\cite{BalisterEtAl}.
Our contribution is the boundary event and its deterministic blocking
property: boundary paths may cross the connector any even number of
times, while internal cycles impose no condition. The proof of
Theorem~\ref{thm:criterion} is given in Appendix~\ref{app:criterion}.

The same event admits exact computation and rigorous approximation.
The three-color tensor in Theorem~\ref{thm:tensor} represents its
probability with total weight one for every internal cycle. At a
directed cut, each fixed-environment continuation is a signed partial
permutation (Theorem~\ref{thm:continuation}). Nonnegative rational row
and column potentials covering the exact residuals therefore certify
an approximate contraction (Theorem~\ref{thm:certificate}). Applying
these results yields the following finite-volume theorem.

\Needspace{8\baselineskip}
\begin{theorem}[Certified finite-size nonmonotonicity]\label{thm:nonmonotonicity}
At mirror density \(p=2/5\),
\[
 h_L(2/5)>h_{L+2}(2/5)
 \qquad\text{for }L\in\{2,4,6,8,10,12\},
\]
whereas
\[
 h_{16}(2/5)-h_{14}(2/5)>\frac{34067}{10^8}.
\]
Thus \(L\mapsto h_L(2/5)\) is nonmonotone on the even widths,
and \(L=14\) is the unique minimizer among the tested widths
\(2,4,\ldots,16\).
\end{theorem}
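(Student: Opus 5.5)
The plan is to turn Theorem~\ref{thm:nonmonotonicity} into finitely many certified interval enclosures and then compare them. For each even width \(L\in\{2,4,\ldots,16\}\) we produce rational numbers \(\ell_L\le h_L(2/5)\le u_L\). The seven claimed decreases then follow from \(u_{L+2}<\ell_L\) for \(L=2,\ldots,12\), and the increase from \(\ell_{16}-u_{14}>34067/10^8\). Uniqueness of the minimizer at \(L=14\) is a consequence of the same inequalities: the chain \(h_2>h_4>\cdots>h_{14}\) together with \(h_{16}>h_{14}\) already places the minimum at \(L=14\), with strict inequality against every other tested width.

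\emph{Small widths.} For the smallest widths (\(L=2,4\), and perhaps \(6\)) we would compute \(h_L(2/5)\) exactly in rational arithmetic. This is a full contraction of the three-color tensor network of Theorem~\ref{thm:tensor}, which represents \(h_L\) with every internal cycle given weight one, so no cycle bookkeeping is needed. We contract column by column along directed cuts. The rectangle is only \(L\) rows high, so the transfer dimension is \(3^{O(L)}\), which is manageable for small \(L\). In these cases \(\ell_L=u_L\).

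\emph{Larger widths.} For \(L\ge 8\) the exact contraction is infeasible, so we use approximate contraction: a truncated or compressed boundary-state representation, carried out in floating point. We then certify the result. At each directed cut, Theorem~\ref{thm:continuation} says that every fixed-environment continuation is a signed partial permutation. The exact residuals between the approximate and the true cut tensors are therefore propagated with operator norm at most one, measured against suitable potentials. We construct nonnegative rational row and column potentials that dominate these residuals entrywise; the matching-based variant should give tighter bounds. Theorem~\ref{thm:certificate} then bounds \(|h_L(2/5)-\tilde h_L|\) by a rational quantity computed from the potentials. That bound gives \(\ell_L,u_L\). All final arithmetic is done in exact rationals, or in outward-rounded interval arithmetic, so the floating-point approximation enters only as a proposal that the certificate then checks.

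\emph{Main obstacle.} We expect the hardest step to be obtaining error bounds tight enough for the closely spaced values near the minimum, namely \(h_{12},h_{14},h_{16}\). The certified gap there is only about \(3.4\times10^{-4}\), so the enclosures at \(L=14,16\) must have width well below that. Our first attempt would be to raise the retained bond dimension and use the matching-based potentials, which exploit the partial-permutation structure rather than crude entrywise absolute bounds. If the bounds are still too loose, we would split the environment on a few columns adjacent to the connector \(\alpha_L\). We would certify each conditional contraction separately and recombine them with weights \((2/5)^k(3/5)^{m-k}\). This trades computational time for sharper residuals.
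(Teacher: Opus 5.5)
Your reduction to rational enclosures followed by comparisons is the paper's own, and your uniqueness argument is right (there are six decreases, not seven). The difference, and the weak point, is how you propose to get the enclosures.

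\textbf{Exact contraction is feasible well past \(L=6\).} The sweep of Corollary~\ref{cor:integer} stores \(3^{L+1}\) integers, which is about \(1.4\times10^7\) at \(L=14\). The paper evaluates \(h_L(2/5)\) exactly for every \(L\le14\), so the six decreases are plain integer cross-multiplications. For \(L=14,16\) it runs the \emph{uncompressed} sweep in fixed point with \(Q=2^{58}\). Truncation makes every residual coordinate at most \(4/(5Q)\). The crude bound \(\cM(A)\le k\max|A_{ij}|\) of Lemma~\ref{lem:matching} then gives the a priori radius \(8L^2 3^{L/2}/(5Q)\) of Corollary~\ref{cor:fixed-point}, which is of order \(10^{-11}\), with no potentials to compute.

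\textbf{Compression saves nothing on certification.} Theorem~\ref{thm:certificate} needs the exact residual \(\widehat v_i-A_i\widehat v_{i-1}\) as a dense vector at every cut. Certifying a compressed proposal therefore costs at least as much as the dense fixed-point sweep it was meant to avoid.

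\textbf{Your precision target is off by about four orders of magnitude, and this is the real gap.} The constant \(34067/10^8\) is not a loose bound. The tabulated values give \(h_{16}(2/5)-h_{14}(2/5)\approx3.406884\times10^{-4}\), so the stated inequality leaves only about \(1.8\times10^{-8}\) of slack. The radii at \(L=14\) and \(L=16\) must together fall below that, not merely ``well below \(3.4\times10^{-4}\).'' Enclosures of width \(10^{-6}\), say, would prove an increase but not the theorem as stated.

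The paper's own \(L=10\) compression data make this concrete. Certified matching radii sit several orders of magnitude above the actual error at truncated rank, for example \(9\times10^{-6}\) against \(1.3\times10^{-11}\) at \(\chi=128\). At \(L=14,16\) you would essentially have to reach full rank \(3^{L/2}\) with high-precision coefficients, which is the paper's dense fixed-point sweep again.

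\textbf{The environment-splitting fallback does not fix this.} Conditioning on a column multiplies the number of contractions by \(2^L\). It does not target the source of the residual, which is compression and rounding rather than randomness of the environment.
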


The proof uses exact integer evaluations and rational enclosures,
reported in Section~\ref{sec:finite-size}. The strict increase rules
out a decreasing extrapolation of the smaller-width values. It does
not determine the large-width limit. The computed probabilities remain
below the threshold in Theorem~\ref{thm:criterion}, so this theorem is
not used to claim planar localization at \(p=2/5\).

\subsection{Further experiments and organization}

We complement the finite-size theorem with an exact density scan. The
location of the smallest measured value changes with density, leading
to crossings of finite-size curves. We report these comparisons
without fitting a critical density or a scaling exponent.

A compression experiment compares matrix-product-state proposals
against the exact transfer maps. For the same proposals, matching-based
residual bounds are consistently smaller than entrywise bounds.
However, an accurate central value can still have a substantially
larger certified error. Both quantities are measured separately.

Finally, complete enumerations in two \(4\times4\) boxes examine
the response to local mirror changes. They exhibit dependence on the
connector endpoints and a cooperative two-contact response invisible
to either isolated change. These tests identify constraints on local
simplifications of the boundary observable.

Sections~\ref{sec:finite-size}--\ref{sec:local-experiments} present
the finite-size, density, compression, and local-routing experiments.
Section~\ref{sec:methods} explains their computational procedures and
verification. The supporting proofs of the confinement criterion,
tensor representation, error bounds, and local-response formulas are
collected in the appendices.

\subsection{Relation to earlier work}

Lattice versions of the wind-tree model were introduced by Ruijgrok and
Cohen~\cite{RuijgrokCohen}. Numerical studies of related Lorentz lattice
gases have examined both transport and the underlying trajectory geometry.
Binder~\cite{Binder1987} measured diffusion and finite-state cycle
statistics for several lattice rules. Kong and
Cohen~\cite{KongCohen1989,KongCohen1991} investigated how transport
depends on the scattering rule and lattice geometry. Further numerical
comparisons by Cohen and Wang~\cite{CohenWang1995} and Wang and
Cohen~\cite{WangCohen1995} explored several Lorentz lattice-gas
models, including honeycomb and quasi-lattice geometries.

For random-mirror trajectories, Ziff, Kong, and
Cohen~\cite{ZiffKongCohen} developed connections with kinetic walks
and percolation geometry. Owczarek and
Prellberg~\cite{OwczarekPrellberg} used kinetic-growth simulations
of interacting trails, while Cao and Cohen~\cite{CaoCohen}
numerically studied the scaling of closed trajectories in mirror and
rotator models. These studies provide context for the sensitivity of
finite-size measurements to the local routing law. Classical trajectory
representations of quantum networks were developed by Beamond, Cardy,
and Chalker~\cite{BeamondCardyChalker} and, on the Manhattan lattice,
by Beamond, Owczarek, and Cardy~\cite{BeamondOwczarekCardy}.

Recent work also addresses finite-slab transport. Lefevere~\cite{LefevereConductivity}
compares three-dimensional Lorentz conductivity simulations with a
multiscale calculation based on a closure assumption. Lefevere and
Tasaki~\cite{LefevereTasaki} combine a hierarchical mirror model with
simulations of conductance fluctuations in the original three-dimensional
model. Gu and Hao~\cite{GuHao} study boundary transmission and
conditional excursion geometry in the symmetric planar Lorentz mirror
model, reporting cylinder penetration, repeated visits, and
endpoint-collision statistics with sampling uncertainty. The present
experiments concern the Manhattan mirror law and a per-boundary-path
parity event, with exact summation and deterministic contraction bounds
controlling numerical error.

For the planar Manhattan mirror model, Li~\cite{LiManhattan} proved exponential
confinement for \(p>1/2-\varepsilon_0\), for some
\(\varepsilon_0>0\). Polynomial confinement on even cylinders was
proved by Li~\cite{LiCylinder}; Ryan~\cite{Ryan} obtained a
\(p^{-2}\) scale in a low-density cylinder regime. The Lorentz
escape lower bound of Kozma and Sidoravicius~\cite{KS} makes it
necessary to distinguish the two models when interpreting finite-size
data. The computations reported here do not establish planar
localization at \(p=2/5\).

Our contraction uses a three-color realization of the signed loop
cancellation appearing in the intersecting-loop and supersymmetric
literature~\cite{MartinsNienhuisRietman,JacobsenReadSaleur,NahumEtAl}.
Matrix-product-state and tensor-train approximation are standard
tools~\cite{Oseledets}; certified contraction through convex
relaxations has also been studied by Ono, Zhang, and
Po~\cite{OnoZhangPo}. The residual covers used below are based on
classical matching inequalities~\cite{Birkhoff,Kuhn}. Their applicability
here follows from the directed boundary connections of the actual
physical model, rather than from an assumption about a numerical
optimizer.
Our study is motivated by the broader problem of localization in random
media, originating in Anderson's work \cite{Anderson} and the subsequent
scaling theory of Abrahams, Anderson, Licciardello, and Ramakrishnan
\cite{AALR}. For random Schr\"odinger operators, rigorous results on
spectral localization include
\cite{DingSmart,LiBernoulli2D,LiZhangBernoulli3D}. Here we investigate
geometric confinement through estimates on trajectories generated by
random local pairings.

\section{Observable and experimental design}\label{sec:model}

\subsection{One variable per physical site}

Rows with even coordinate point east and odd rows point west; columns
with even coordinate point north and odd columns point south. Put
\[
 H_y=((-1)^y,0),\qquad V_x=(0,(-1)^x).
\]
The independent variables \(\omega_{x,y}\) satisfy
\(\mathbb P_p(\omega_{x,y}=1)=p\), where one denotes a mirror.
An incoming state \((x,y,A)\), with \(A\in\{H,V\}\), continues
on its current axis if \(\omega_{x,y}=0\) and changes axis if
\(\omega_{x,y}=1\). It then moves by \(H_y\) or \(V_x\),
according to its outgoing axis. Both passages through a site use
this same bit. Unless stated otherwise, \(p\in[0,1]\).

The routing is a bijection on directed states. In a finite rectangular
domain, cut edges at the geometric boundary of the union of unit
squares centered at its vertices. Each incoming boundary leg then has
a unique outgoing partner. The remaining states form internal cycles;
a boundary path cannot enter one of those cycles. A proof is given in
Lemma~\ref{lem:decomposition}.

For the domain \(D_L\) in \eqref{eq:geometry}, the connector
\(\alpha_L\) with endpoints \eqref{eq:connector} crosses the
\(L\) vertical edges between rows \(c-1\) and \(c\), at columns
\(c\le x<3c\), where \(c=L/2\). The boundary decomposition
therefore makes \(H_L\) in \eqref{eq:observable} a well-defined
event of the physical mirror field. Figure~\ref{fig:connector} fixes
the geometry used in all balanced-rectangle experiments.

\begin{figure}[htbp]
\centering
\begin{tikzpicture}[scale=.72,>=Stealth]
\draw[gray!65,dashed](-.5,-.5) rectangle (7.5,3.5);
\draw[gray!65,dashed](3.5,-.5)--(3.5,3.5);
\foreach \y in {0,1,2,3}{
 \foreach \x in {0,1,2,3,4,5,6}{
  \pgfmathtruncatemacro{\evenrow}{mod(\y,2)}
  \ifnum\evenrow=0
    \draw[->,gray!75](\x,\y)--(\x+1,\y);
  \else
    \draw[<-,gray!75](\x,\y)--(\x+1,\y);
  \fi}}
\foreach \x in {0,1,2,3,4,5,6,7}{
 \foreach \y in {0,1,2}{
  \pgfmathtruncatemacro{\evencol}{mod(\x,2)}
  \ifnum\evencol=0
    \draw[->,gray!75](\x,\y)--(\x,\y+1);
  \else
    \draw[<-,gray!75](\x,\y)--(\x,\y+1);
  \fi}}
\foreach \x in {2,3,4,5}{\draw[blue!65!black,very thick](\x,1.2)--(\x,1.8);}
\foreach \x in {0,1,2,3,4,5,6,7}
 \foreach \y in {0,1,2,3}{\fill (\x,\y) circle(1.6pt);}
\draw[red!70!black,thick,->](1.5,1.5)--(5.5,1.5);
\fill[red!70!black](1.5,1.5) circle(2.3pt);
\fill[red!70!black](5.5,1.5) circle(2.3pt);
\node[above,red!70!black] at (1.5,1.5){\(u_L\)};
\node[above,red!70!black] at (5.5,1.5){\(v_L\)};
\node[below,red!70!black] at (3.5,1.5){\(\alpha_L\)};
\end{tikzpicture}
\caption{The two-square rectangle for \(L=4\). The connector crosses four
marked vertical edges. The arrows specify the deterministic street
directions; mirror variables are assigned independently to the vertices.}
\label{fig:connector}
\end{figure}

For the smaller local tests we also use \(H(D,\alpha)\), the same
per-path event for a specified rectangle and connector. An edge is
marked if its intersection multiplicity with \(\alpha\) is odd.
With sign \(-1\) on marked edges and \(+1\) on the others, a
boundary path is successful exactly when its sign is positive.
Thus \(H(D,\alpha)\) is equivalent to \(S\one=\one\), where
\(S\) is the signed boundary permutation.

\subsection{The four experiments}

Table~\ref{tab:design} records the domains, parameters, and arithmetic
used below. At rational \(p=a/b\), the full physical product measure
can be contracted with integers:
\begin{equation}\label{eq:intro-integer}
 h_L(a/b)=\frac{Z_L(a,b)}{b^{2L^2}},\qquad Z_L(a,b)\in\Z.
\end{equation}
The normalized three-color representation establishing this identity
is described in Section~\ref{sec:methods} and proved in
Appendix~\ref{sec:tensor}.

\begin{table}[htbp]
\centering\small
\begin{tabular}{p{.20\textwidth}p{.39\textwidth}p{.33\textwidth}}
\toprule
Experiment&Domain and parameters&Reported output\\
\midrule
Finite size&\(D_L\), \(p=2/5\), even \(L=2,\ldots,16\)
&Exact values through \(L=14\); certified interval at \(L=16\)\\[3pt]
Density scan&\(L=2,4,6,8,10\); \(p=k/20\), \(0\le k\le20\)
&105 exact rational probabilities\\[3pt]
Compression&\(L=10\), \(p=2/5\), eight bond caps; coefficient denominator \(2^{40}\)
&Actual errors and two deterministic error bounds\\[3pt]
Local routing&Two marked \(4\times4\) boxes; all fields and all single-site changes
&Cavity counts, signed responses, and contact examples\\
\bottomrule
\end{tabular}
\caption{Experimental design. The \(4\times4\) local boxes are
separate test geometries and are not the \(D_4\) rectangle.}
\label{tab:design}
\end{table}

The probabilities are obtained by exact summation or by deterministic
enclosure of a tensor contraction. No sampling uncertainty enters
these measurements. Floating-point linear algebra is used to propose
compressed vectors, after which their stored rational coefficients
are checked independently. The curves between evaluated densities
serve only to display the data.

\section{Finite-size results at \texorpdfstring{\(p=2/5\)}{p=2/5}}\label{sec:finite-size}

\subsection{A decrease followed by a certified increase}

Table~\ref{tab:exact} and Figure~\ref{fig:finite-size} give the
finite-size profile. The first seven probabilities were evaluated
with arbitrary-precision integers. The eighth was enclosed using a
fixed-point contraction with denominator \(2^{58}\).
The exact small-width values include
\begin{equation}\label{eq:small-exact}
 h_2(2/5)=\frac{184}{625},\qquad
 h_4(2/5)=\frac{1218364220691093248}{7450580596923828125}.
\end{equation}
Full integer numerators and denominators for all exactly evaluated
widths are supplied with the article.

\begin{table}[htbp]
\centering
\begin{tabular}{rll}
\toprule
\(L\)&\(h_L(2/5)\)&Arithmetic\\
\midrule
2&\(0.2944\) (exact)&integer\\
4&\(0.163526077577649125\ldots\)&integer\\
6&\(0.133941887841429917\ldots\)&integer\\
8&\(0.119109655957742335\ldots\)&integer\\
10&\(0.112799803761915568\ldots\)&integer\\
12&\(0.109734729315510052\ldots\)&integer\\
14&\(0.108905814034236649\ldots\)&integer\\
16&\((0.10924650244,\ 0.10924650247)\)&certified enclosure\\
\bottomrule
\end{tabular}
\caption{Initial digits of exact probabilities and an outward decimal
enclosure of the final interval. All widths refer to
\eqref{eq:geometry}.}
\label{tab:exact}
\end{table}

\begin{figure}[htbp]
\centering\includegraphics[width=\textwidth]{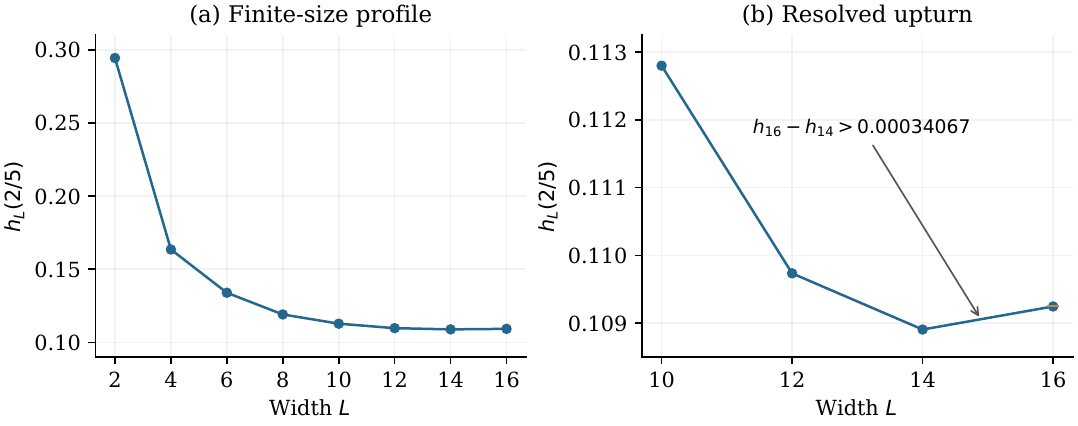}
\caption{Finite-size probability at \(p=2/5\). Panel (b) enlarges
the last four widths. The \(L=16\) enclosure has width less than
\(1.87\times10^{-11}\), below the plotted marker size. Lines
connect the measured widths.}
\label{fig:finite-size}
\end{figure}

The initial drop is substantial: from \(L=2\) to \(L=10\),
\(h_L\) decreases by more than a factor of two. The later decrease
is much slower, and reverses at the last tested step.
For exact comparison, use the common denominator
\[
 D_*=5\cdot2^{58}=1441151880758558720.
\]
The independently checked fixed-point outputs give
\begin{align}
 h_{14}(2/5)&\in
 \left[\frac{156949818717553904}{D_*},
       \frac{156949818724412336}{D_*}\right],\nonumber\\
 h_{16}(2/5)&\in
 \left[\frac{157440802463709882}{D_*},
       \frac{157440802490583738}{D_*}\right].\label{eq:intervals}
\end{align}
The first interval also contains the arbitrary-precision \(L=14\)
value. Subtracting the indicated rational endpoints proves
\begin{equation}\label{eq:increase}
 h_{16}(2/5)-h_{14}(2/5)>0.00034067.
\end{equation}
This increase is about \(0.31\%\) of the \(L=14\) probability,
and is far larger than either certified rounding bound.

\begin{proof}[Proof of Theorem~\ref{thm:nonmonotonicity}]
The exact fractions underlying Table~\ref{tab:exact} give the six
strict decreases by integer cross-multiplication. Their full numerators
and denominators are recorded in the supplement. The rational
enclosures \eqref{eq:intervals} give \eqref{eq:increase}, whose
right-hand side is \(34067/10^8\). Together these comparisons prove
the claimed nonmonotonicity and the unique minimum among the tested
widths. The tensor normalization and deterministic residual bounds
justifying the evaluations are proved in Appendices~\ref{sec:tensor}
and \ref{sec:certificates}.
\end{proof}

\subsection{What the observed upturn establishes}

Among the measured even widths \(2\) through \(16\), the minimum
occurs at \(14\). In particular, these finite-size probabilities
are not monotonically decreasing in width. Their definitions change
with \(L\): both the boundary and the connector endpoints move.
There is consequently no nesting of the measured events that would
force a one-sided finite-size trend.

The data do not resolve the large-width limit or the width at which
a high-probability confinement criterion might become useful.
The largest displayed width still gives approximately \(0.10925\),
well below \(0.8457\). We therefore use the upturn as a finite-size
observation, without extracting a localization length or fitting an
asymptotic power law from these eight points.

\section{Dependence on mirror density}\label{sec:density}

\subsection{An exact 105-point scan}

We evaluated every pair
\[
 L\in\{2,4,6,8,10\},\qquad p\in\{0,1/20,\ldots,19/20,1\}
\]
by integer contraction. Figure~\ref{fig:density} shows all 105 values;
Table~\ref{tab:density} lists representative densities. At each fixed
width, the measured values are strictly increasing along this grid.
This is an exact comparison of the sampled rational values, rather
than a proof of monotonicity between them. Individual changes of a
physical mirror can either create or remove odd boundary paths, as
the local experiments in Section~\ref{sec:local-experiments} demonstrate.

\begin{figure}[htbp]
\centering\includegraphics[width=\textwidth]{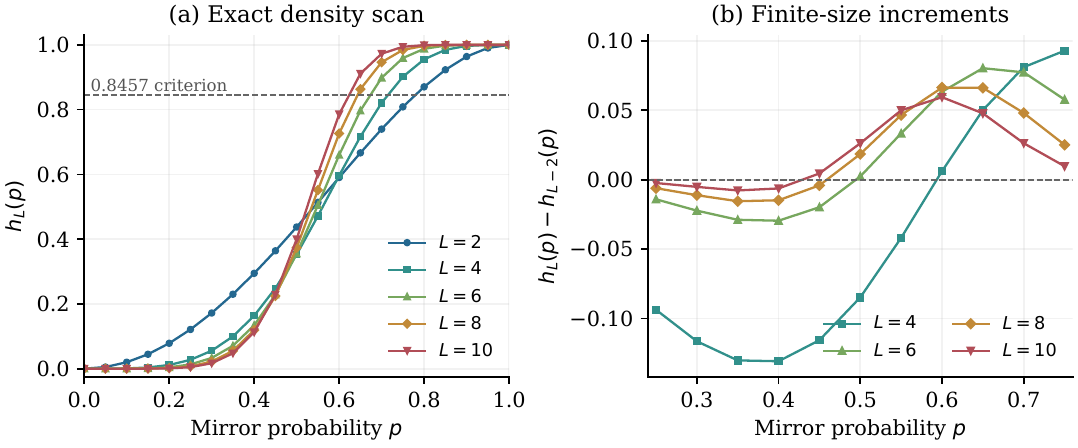}
\caption{Exact density scan. Panel (a) shows the boundary-parity
probabilities and the sufficient \(0.8457\) criterion. Panel (b)
shows adjacent measured-width differences. Each marker is an exact
rational evaluation; the connecting segments impose no interpolation
assumption. Crossings in panel (b) concern finite-width comparisons.}
\label{fig:density}
\end{figure}

\begin{table}[htbp]
\centering\small
\begin{tabular}{rllll}
\toprule
\(p\)&\(h_4(p)\)&\(h_6(p)\)&\(h_8(p)\)&\(h_{10}(p)\)\\
\midrule
0.25&0.027318147&0.013180467&0.007061335&0.004650719\\
0.40&0.163526078&0.133941888&0.119109656&0.112799804\\
0.45&0.248315813&0.228487985&0.224404006&0.228901192\\
0.50&0.352678254&0.355017988&0.373512164&0.399760138\\
0.65&0.716628892&0.797005653&0.863248673&0.911079801\\
0.80&0.955289155&0.987895208&0.996922873&0.999189847\\
\bottomrule
\end{tabular}
\caption{Selected density-scan values, rounded to nine decimal places. The underlying calculations and all comparisons use exact fractions.}
\label{tab:density}
\end{table}

The \(L=2\) scan has a separate calibration from all \(2^8=256\)
physical fields. Grouping successful fields by their number of
mirrors and expanding the resulting polynomial gives
\begin{equation}\label{eq:L2-polynomial}
 h_2(p)=2p^2-p^4.
\end{equation}
All 21 tensor values at this width agree with this independently
enumerated polynomial. The \(p=2/5\) values at every scanned width
also agree with the finite-size records in Table~\ref{tab:exact}.

\subsection{The finite-size trend depends on density}

At \(p=0.40\), the five widths used in this scan all lie on the
decreasing portion of the profile. At \(p=0.45\), however,
\[
 h_8(0.45)=0.224404005897\ldots,\qquad
 h_{10}(0.45)=0.228901192390\ldots,
\]
and the smallest scanned value occurs at \(L=8\).
At \(p=0.50\), the smallest scanned value occurs at \(L=4\),
with the values at \(6,8,10\) successively larger.
For example,
\begin{align*}
 h_{10}(0.40)-h_8(0.40)&\approx-0.006309852196,\\
 h_{10}(0.45)-h_8(0.45)&\approx\phantom{-}0.004497186494,\\
 h_{10}(0.50)-h_8(0.50)&\approx\phantom{-}0.026247974874.
\end{align*}
The signs of these differences were checked with exact fractions.

Each \(h_L(p)\) is a finite polynomial, so the first two signs imply
at least one crossing of \(h_{10}\) and \(h_8\) between \(0.40\)
and \(0.45\). The scan neither establishes uniqueness of this
crossing nor identifies it with a phase transition. Its practical
message is that a decreasing short-width profile and an increasing
longer-width profile can occur at the same fixed density.

\subsection{Reaching the sufficient criterion on the sampled grid}

The smallest sampled densities at which the finite probability
exceeds \(0.8457\) are as follows:
\begin{center}
\begin{tabular}{rccccc}
\toprule
Width \(L\)&2&4&6&8&10\\
First sampled \(p\)&0.80&0.75&0.70&0.65&0.65\\
\bottomrule
\end{tabular}
\end{center}
For example, \(h_8(0.65)=0.863248672529\ldots\) and
\(h_{10}(0.65)=0.911079800861\ldots\). These values exceed the
criterion by an exact rational comparison. They lie in a density
regime already covered by the known confinement result
\cite{LiManhattan}. The table records the resolution of this particular
finite-width scan; it does not give the smallest density at which
localization holds or at which a larger rectangle could certify it.

\section{Compression error and its certification}\label{sec:compression}

\subsection{A common benchmark for eight proposals}

The exact \(L=10\), \(p=2/5\) value provides a benchmark for
separating actual approximation error from a computable certificate.
We generated matrix-product-state proposals with nominal bond caps
\[
 \chi\in\{8,16,32,48,64,96,128,243\}.
\]
At every complete column, each proposal was converted to a full
coefficient vector with denominator \(Q=2^{40}\). All eight runs
used the same tensor, sweep order, density, and rounding convention.
The floating-point proposals were generated with one requested BLAS
thread; the error certificates subsequently used integer arithmetic.

Let \(\widehat h_\chi\) be the final stored coefficient and
\(e_\chi=|\widehat h_\chi-h_{10}(2/5)|\) its actual error.
For each run we evaluated two bounds from exactly the same residuals:
an entrywise bound \(E_{\ell^1}\) and a directed matching bound
\(E_{\rm match}\). Their construction is given in
Section~\ref{sec:methods}. They satisfy
\begin{equation}\label{eq:three-errors}
 e_\chi\le E_{\rm match}\le E_{\ell^1}.
\end{equation}
Here \(e_\chi\) is available because an exact benchmark was computed;
the two bounds can be checked without knowing that benchmark.

\begin{table}[htbp]
\centering
\begin{tabular}{rrrrr}
\toprule
\(\chi\)&Actual error \(e_\chi\)&\(E_{\rm match}\)&\(E_{\ell^1}\)&Gain\\
\midrule
8&\(6.34\times10^{-2}\)&\(3.49\times10^{0}\)&\(1.93\times10^{1}\)&5.52\\
16&\(3.16\times10^{-4}\)&\(2.51\times10^{-1}\)&\(2.07\times10^{0}\)&8.24\\
32&\(4.80\times10^{-5}\)&\(1.87\times10^{-2}\)&\(1.94\times10^{-1}\)&10.42\\
48&\(6.71\times10^{-6}\)&\(4.12\times10^{-3}\)&\(4.87\times10^{-2}\)&11.81\\
64&\(5.29\times10^{-7}\)&\(7.65\times10^{-4}\)&\(9.79\times10^{-3}\)&12.80\\
96&\(5.13\times10^{-9}\)&\(8.56\times10^{-5}\)&\(1.04\times10^{-3}\)&12.17\\
128&\(1.27\times10^{-11}\)&\(9.08\times10^{-6}\)&\(1.24\times10^{-4}\)&13.69\\
243&\(7.02\times10^{-14}\)&\(2.05\times10^{-9}\)&\(5.53\times10^{-8}\)&26.93\\
\bottomrule
\end{tabular}
\caption{Compression study. Error magnitudes are rounded displays of exact rational quantities. Gain is \(E_{\ell^1}/E_{\rm match}\). Bounds larger than one are shown before intersection with the probability range \([0,1]\).}
\label{tab:rank}
\end{table}

\begin{figure}[htbp]
\centering\includegraphics[width=\textwidth]{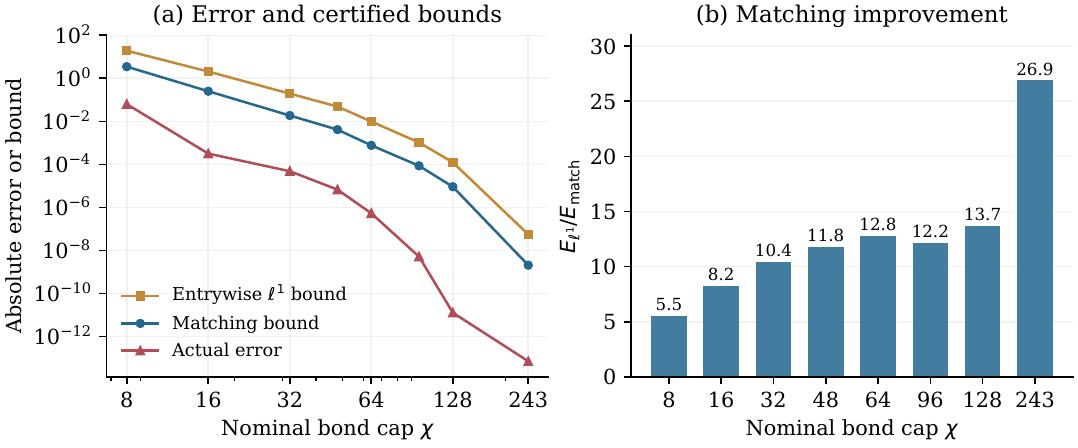}
\caption{Compression study at \(L=10\), \(p=2/5\), and
\(Q=2^{40}\). Panel (a) separates actual error from two deterministic
bounds on logarithmic axes. Panel (b) shows the improvement obtained
by applying matching constraints to the same residuals. The nominal
cap describes the floating-point proposal before coefficient rounding.}
\label{fig:compression}
\end{figure}

\subsection{Accuracy of the center versus strength of the bound}

Both the actual error and the two residual bounds decrease across
these eight runs, but their magnitudes differ substantially. With
\(\chi=16\), the center has error about \(3.16\times10^{-4}\),
while the matching bound is approximately \(0.251\).
At \(\chi=128\), the actual error is approximately
\(1.27\times10^{-11}\), yet the matching bound remains about
\(9.08\times10^{-6}\). Accurate central values alone would
therefore substantially overstate the precision that this residual
certificate establishes.

For \(\chi=64\), the certificate has the exact common denominator
\[
 D_{\rm c}=5^{10}2^{40}=10737418240000000000.
\]
Its center numerator is \(1211184346787109375\), and its summed
covering cost is \(8210572577426124\). Thus
\begin{equation}\label{eq:mps-interval}
 h_{10}(2/5)\in
 \left[\frac{1202973774209683251}{D_{\rm c}},
       \frac{1219394919364535499}{D_{\rm c}}\right].
\end{equation}
The matching radius is approximately \(0.0007646692\), compared
with \(0.0097877206\) for the entrywise radius. The improvement
factor is approximately \(12.8\), while the actual center error is
about \(5.29\times10^{-7}\).

At the largest cap, \(\chi=243\), the actual error is approximately
\(7.02\times10^{-14}\), with matching bound
\(2.05\times10^{-9}\). A length-ten, three-state coefficient
vector has a tensor-train representation with intermediate ranks at
most \(3^{\min(j,10-j)}\), whose maximum is \(243\)
\cite{Oseledets}. This run therefore provides a control at the full
representational rank. Floating-point operations and subsequent
coefficient rounding remain present.

\subsection{Proposal size and verification cost}

The nominal bond cap constrains the floating-point proposal. Rounding
its full coefficient vector does not imply that the resulting rational
vector has the same rank. Moreover, the certificate checker in this
experiment reconstructs all \(3^{10}=59049\) boundary coordinates
and their residuals. Its successful check does not establish an
algorithm with storage bounded by the nominal bond cap alone.

The proposal's largest number of core entries rises from \(1314\)
at \(\chi=8\) to \(36582\) at \(\chi=64\), and reaches
\(132858\) at \(\chi=243\). At large caps that count can exceed
the size of the full coefficient vector. We use this experiment to
measure approximation and certification, without claiming a general
memory or runtime advantage at every cap.

\section{Local-routing experiments}\label{sec:local-experiments}

\subsection{Exhaustive single-site changes}

We next examine a different, deliberately small geometry:
\(D=\{0,1,2,3\}^2\). One test marks the full seam through
\(y=3/2\), crossing all four vertical edges. A second marks only
the two edges at \(x=1,2\), so its connector endpoints lie inside
the box. Neither geometry is the \(8\times4\) rectangle \(D_4\).

For each test, we enumerated all \(2^{16}\) fields and compared
the two fillings of each physical site. There are
\(16\cdot2^{15}=524288\) distinct site/background toggle pairs.
To classify a pair, remove both passages at its site and follow its
two outgoing cavity routes until they meet the boundary or return
to a cavity input. Let \(k\in\{0,1,2\}\) be the number of
returns. Table~\ref{tab:cavity} reports the resulting counts and
whether the signed boundary map changes.

\begin{table}[htbp]
\centering
\begin{tabular}{rrrr}
\toprule
Returns \(k\)&Toggle pairs&Changed: full seam&Changed: interior connector\\
\midrule
0&442624&442624&442624\\
1&78848&0&43008\\
2&2816&0&0\\
\bottomrule
\end{tabular}
\caption{Complete \(4\times4\) enumeration. Each column counts
site/background pairs once; fractions formed from these counts use
uniform backgrounds, corresponding to \(p=1/2\) away from the
toggled site. The counts are not estimates at \(p=2/5\).}
\label{tab:cavity}
\end{table}

The two tests have the same unsigned cavity counts, because marking
an edge does not change its physical routing. Their signed responses
are different in the one-return class. For the full seam, all such
returns have even parity and all \(78848\) corresponding changes
are invisible at the boundary. For the interior connector, \(43008\)
of them are visible. The experiment isolates the role of connector
endpoints in a proposed local erasure rule. Appendix~\ref{sec:cavity}
proves the classification and the exact rank-one response underlying
these observations.

\subsection{A cooperative two-contact example}

In the full-seam box, put reference mirrors exactly at
\[
 (1,0),(2,0),(0,1),(1,1),(2,1),(3,1),
\]
and leave the two sites \((1,0),(1,1)\) available for toggling.
Forward and backward coordinate routing give the following four
outcomes, where an active contact means a change from its reference
mirror to a vacancy:
\begin{center}
\begin{tabular}{lrrrr}
\toprule
Active contacts&00&10&01&11\\
Odd boundary paths&0&0&0&2\\
\bottomrule
\end{tabular}
\end{center}
The first three signed boundary maps are identical. Thus the absence
of a response to either isolated change does not justify erasing
both variables simultaneously.

The reference geometry contains an even cycle with two distinct
boundary contacts. Only activation of both contacts exchanges the
two external tails. Conditional on the other fourteen bits, the
success probability is therefore
\begin{equation}\label{eq:cooperative-probability}
 1-(1-p)^2=\frac{16}{25}\quad\text{at }p=\frac25.
\end{equation}
The conditioned background has weight \(p^4(1-p)^{10}\); this
factor must be retained in any unconditional calculation.
Appendix~\ref{sec:contacts} gives the multi-contact formula that
reproduces this example. Its correction for the all-inactive state
ensures that an internal cycle has no artificial boundary penalty.

\section{Computational methods and verification}\label{sec:methods}

\subsection{Integer tensor contraction}

The local vacant and mirror routing tensors are denoted by
\(T_0\) and \(T_1^{x,y}\). Their three colors are two ordinary
colors and one signed color. Boundary vectors select even path parity,
while the sum over a closed cycle is exactly one, whether that cycle
has even or odd intersection with the connector. The explicit tensors,
boundary vectors, and planar sign argument appear in
Appendix~\ref{sec:tensor}; the rotation-number identity used there
is classical~\cite{Whitney}.

For \(p=a/b\), we contract the integer local tensor
\[
 (b-a)T_0+aT_1^{x,y}
\]
at every site and divide the final scalar by \(b^{2L^2}\).
There is one tensor choice per physical site, so this averaging
does not replace repeated passages by independent choices.
The site order is increasing \(x\), then increasing \(y\) within
each column. The sweep stores \(L\) horizontal colors and one
vertical carry, for a temporary dimension \(3^{L+1}\). Its
arithmetic operation count is \(O(L^2 3^{L+1})\), with growing
integer bit lengths in exact evaluations.

The density scan used NumPy object arrays containing arbitrary-size
Python integers. The \(L\le14\) reference values at \(p=2/5\)
used the same geometric formulation. The wider fixed-point sweep
used C++ integer arithmetic and explicit range checks before local
linear combinations. The different implementations use the same
specified physical model; their independent geometric and flattened
updates were compared at smaller widths.

\subsection{Deterministic fixed-point enclosures}

At \(p=2/5\), the normalized local tensor is
\((3T_0+2T_1)/5\). In fixed point, coefficients have denominator
\(Q=2^{58}\), and division by five is truncated towards zero after
each site. Every coordinate of the resulting local residual has
absolute value at most \(4/(5Q)\).

Group the free legs of an exact continuation by their physical
incoming and outgoing directions. A fixed environment gives a signed
partial permutation matrix; its Bernoulli average has absolute row
and column sums at most one. At every staircase cut the smaller
matrix dimension is \(3^{L/2}\). Summing the resulting matching
bound over the \(2L^2\) sites gives
\begin{equation}\label{eq:fixedpoint-main}
 |h_L(2/5)-\widehat h_L|
 \le\frac{8L^2 3^{L/2}}{5Q}.
\end{equation}
Appendices~\ref{sec:tensor} and \ref{sec:certificates} justify
these continuation constraints and the error summation.

The observed integer outputs \(Q\widehat h_L\) are
\(31389963744196624\) at \(L=14\) and
\(31488160495429362\) at \(L=16\). The error numerators in
\eqref{eq:fixedpoint-main}, relative to denominator \(5Q\), are
\(3429216\) and \(13436928\). These values yield
\eqref{eq:intervals}. The \(L=16\) temporary array contains
\(3^{17}=129140163\) signed 64-bit integers, or
\(1033121304\) bytes for that array alone.

\subsection{Residuals of compressed proposals}

Let \(u_x\) be the integer numerator vector stored after column
\(x\), with \(u_{-1}=Qe_{0\cdots0}\), and let \(C_x\)
be the exact integer column map at \(p=2/5\). Its residual
numerator is
\begin{equation}\label{eq:column-residual}
 E_x=5^L u_x-C_xu_{x-1}.
\end{equation}
For even \(L\), reshape this vector into a
\(3^{L/2}\times3^{L/2}\) matrix by putting odd horizontal heights
in one group and even heights in the other. Nonnegative integer
row and column potentials \(a_i^{(x)},b_j^{(x)}\) satisfying
\[
 a_i^{(x)}+b_j^{(x)}\ge |(E_x)_{ij}|
\]
give the computable bounds
\begin{equation}\label{eq:main-error-budgets}
 E_{\rm match}
 =\frac{\sum_x(\sum_i a_i^{(x)}+\sum_j b_j^{(x)})}{5^LQ},
 \qquad
 E_{\ell^1}=\frac{\sum_x\sum_{i,j}|(E_x)_{ij}|}{5^LQ}.
\end{equation}
The actual implementation chooses matching covers by an integer
assignment algorithm. The independent checker needs only their
feasibility, not their optimality.

At \(L=10\), every residual matrix is \(243\times243\).
For each proposal the checker reconstructs 200 geometric site
updates and checks \(20\cdot243^2=1180980\) covering inequalities,
as well as all coefficient dimensions, nonnegative potentials, and
final rational totals. Across the eight proposals this gives
\(9447840\) covering inequalities. No singular-value estimate or
floating-point optimizer output is used to validate a bound.
The common denominator in \eqref{eq:main-error-budgets} is the
denominator of a one-column residual; it is not the denominator
\(5^{2L^2}\) of a fully exact contraction.

\subsection{Reproducibility and independent checks}

The computational supplement contains all exact scan values, the
eight complete dyadic-vector and potential certificates, the original
\(L=14,16\) fixed-point logs, figure-generation scripts, and
coordinate-routing checks. The default verification checks the
saved numerical comparisons, all eight compression certificates,
and the small exact and physical-routing controls. Additional
commands recompute the 105-point scan and the larger integer sweeps.

The local verification includes 384 fixed fields compared between
coordinate routing and geometric tensor contraction, 541 subsets
of closed cycles checked for their signs, and 3072 directed cuts
checked for partial-permutation structure. Forward and backward
coordinate routers agree on 139264 complete small fields.
The single-site counts in Table~\ref{tab:cavity} are exhaustive,
while the multi-contact transfer is checked on 74896 active-field
cases and 2728 complete polynomial comparisons with up to five
contacts. Auxiliary signed-amplitude checks provide another routing
consistency test; circuit-nullity formulations of transition systems
are discussed in~\cite{Traldi}.

The software records exact fractions separately from display decimals.
Experiment scripts record their parameters and preserve the physical
weights of any conditioned background. The checks validate the
reported finite calculations; the general mathematical statements
supporting their certificates are proved in the appendices.

\section{Discussion}\label{sec:discussion}

Three observations emerge from the experiments. First, finite-size
boundary-parity probabilities can turn upward after a substantial
initial decrease. The increase at \(p=2/5\) is certified, and the
density scan shows changes in the measured location of the minimum.
Consequently, the smallest available widths do not support a uniform
decreasing extrapolation of this observable.

Second, geometric constraints on the continuation have a measurable
effect on numerical certification. Matching covers consistently
improve the entrywise bounds in the reported compression runs.
However, their gap from the actual error can remain large even when
the proposal center has many correct digits. Extending these
calculations to greater widths will require attention to the quality
and cost of the certificate as well as to the proposal itself.

Third, local simplification depends on which response is preserved.
The endpoint-sensitive one-return counts and the cooperative
two-contact example show why an isolated successful erasure cannot
automatically be applied to a set of interacting sites. Exact local
formulas can guide conditional computations provided that the
conditioning probabilities are retained.

The present \(p=2/5\) data do not approach the sufficient
\(0.8457\) threshold. No infinite-volume localization conclusion,
critical density, or scaling exponent is inferred from them. The
result of this study is a set of reproducible finite-size measurements
and quantified approximation errors, together with a verified method
for extending those measurements when larger computations become
feasible.

\section*{Acknowledgments}
The authors used OpenAI's GPT models to assist with mathematical
exploration, the development and review of proofs, drafting and
revising the manuscript, and checking bibliographic information.
The authors take full responsibility for all mathematical statements,
proofs, and the final content of this paper.

\clearpage
\appendix
\section{Routing and the finite-scale criterion}\label{app:criterion}
The appendices justify the exact observable and numerical certificates.
All domains, street phases, and connectors use the conventions of
Section~\ref{sec:model}. Internal states mean the two directed incoming
states at each physical site. We first record the finite routing
decomposition and the deterministic blocking argument.
\begin{lemma}[Finite routing decomposition]\label{lem:decomposition}
Every fixed environment in \(D\) decomposes its directed states into
disjoint boundary-input-to-boundary-output paths and internal directed
cycles. Each boundary input has one output, and every output is used
once.
\end{lemma}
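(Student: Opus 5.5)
The plan is to treat the fixed-environment routing as a finite injective map and read off the decomposition from its orbit structure. Let \(\mathcal S\) be the finite set of internal states, namely the two directed incoming states at each site of \(D\), together with a formal set \(\mathcal I\) of boundary input legs and a set \(\mathcal O\) of boundary output legs. The environment \(\omega\) determines a successor map \(\sigma\colon \mathcal I\cup\mathcal S\to\mathcal S\cup\mathcal O\). A boundary input leg is sent to the incoming state at its first site. An internal state \((x,y,A)\) is routed by \(\omega_{x,y}\) to its outgoing axis and moved by \(H_y\) or \(V_x\). The result is either the incoming state at a neighbouring site of \(D\) or, if the step leaves the union of unit squares, an output leg.

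\textbf{Step 1: injectivity and bijectivity of \(\sigma\).} At a single site, the local rule is a bijection between the two incoming directions and the two outgoing directions. This holds both for the vacant pairing (keep the axis) and for the mirror pairing (swap the axis). Because streets are one-way, each incoming state at a site arrives along a unique directed edge from a unique predecessor site, or from a unique boundary leg. Hence every element of \(\mathcal S\cup\mathcal O\) has exactly one \(\sigma\)-preimage. Thus \(\sigma\) is a bijection \(\mathcal I\cup\mathcal S\to\mathcal S\cup\mathcal O\). In particular \(|\mathcal I|=|\mathcal O|\), which can also be checked by counting cut edges.

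\textbf{Step 2: orbit decomposition.} Start from \(i\in\mathcal I\) and iterate \(\sigma\) while the current point lies in \(\mathcal S\). The resulting states are pairwise distinct. Indeed, a first repetition \(\sigma^m(i)=\sigma^n(i)\) with \(m<n\) would, by injectivity, force \(i=\sigma^{n-m}(i)\in\mathcal S\cup\mathcal O\), which is impossible. Since \(\mathcal S\) is finite, the orbit must therefore reach \(\mathcal O\), and it ends there. This gives the path from \(i\) to its output, so every boundary input has exactly one output. Distinct inputs have disjoint paths, again by injectivity, so each output is used at most once. Because the map \(\mathcal I\to\mathcal O\) is injective and \(|\mathcal I|=|\mathcal O|\), every output is used exactly once. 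The states of \(\mathcal S\) not lying on any boundary path are closed under \(\sigma\) and under \(\sigma^{-1}\), since their predecessors cannot lie in \(\mathcal I\). On this finite set \(\sigma\) restricts to a permutation, and its cycles are the internal directed cycles. A boundary path cannot enter one of these cycles, because its states are disjoint from the cycle states by construction.

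\textbf{Main obstacle.} The real content is the preimage count in Step 1 at the boundary. I must check that, with the cut at the geometric boundary of the union of unit squares, each incoming state at a boundary site whose predecessor lies outside \(D\) corresponds to exactly one input leg. Symmetrically, each exit step must correspond to exactly one output leg. The fixed street phases \(H_y,V_x\) make this a direct case check on the four sides. The same bit \(\omega_{x,y}\) governs both passages at a site, and this is used only through the local pairing being a bijection.
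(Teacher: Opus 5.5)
Your argument is correct and is essentially the paper's proof. Uniqueness of predecessors (your injectivity of \(\sigma\)) rules out repeated states and merging, finiteness forces each input route to exit, and the leftover states form a permutation whose orbits are the internal cycles; your explicit bijection \(\sigma\colon\mathcal I\cup\mathcal S\to\mathcal S\cup\mathcal O\) and the count \(|\mathcal I|=|\mathcal O|\) make the ``every output is used once'' claim precise, whereas the paper leaves it implicit.
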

\begin{proof}
Follow successors from a boundary input. A repeated state would put
that state on a cycle, whose predecessor on the cycle is already
occupied; tracing predecessors back to the boundary gives a
contradiction. The route therefore exits. Distinct such paths cannot
merge, by uniqueness of predecessors. Remove them. Every remaining
state has one predecessor and one successor among the remaining
states, and hence belongs to a cycle.
\end{proof}

\begin{lemma}[Modular intersection obstruction]\label{lem:blocking}
Fix one infinite physical environment. Let \(D_1,\ldots,D_k\) be finite
rectangular disks, with connectors \(\alpha_j\) contained in their
interiors. Suppose the connectors concatenate to a closed polygonal
walk \(C\). Let \(z\) lie outside every \(D_j\).
If, for some integer \(m\ge2\), every boundary path in \(D_j\) has
intersection with \(\alpha_j\) divisible by \(m\), while
\(\wind(C,z)\notin m\Z\), then every trajectory starting at \(z\)
is bounded. The witness disks may overlap.
\end{lemma}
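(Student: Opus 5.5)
We argue by contradiction. Suppose a trajectory \(\gamma\) starting at \(z\) is unbounded. Since routing is a bijection on directed states of the infinite lattice (Lemma~\ref{lem:decomposition} applied to the whole plane), \(\gamma\) extends to a bi-infinite injective directed path through \(z\). Because \(\gamma\) is unbounded in the forward direction and each \(D_j\) is finite, it eventually leaves \(D_1\cup\dots\cup D_k\) for good. Hence it also leaves the region enclosed by \(C\) permanently.

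The key quantity is the total signed intersection number of the forward half of \(\gamma\) with the closed walk \(C=\alpha_1\cdots\alpha_k\), counted with the orientation of \(\alpha_j\) and with traversal multiplicity. We compute it in two ways.

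\emph{Topological count.} Let \(\gamma_+\) be the forward trajectory from \(z\), and join its far end to \(z\) by an arc outside a large ball containing every \(D_j\). This closes \(\gamma_+\) into a closed curve \(\Gamma\). The algebraic intersection of \(\Gamma\) with \(C\) equals the change of \(\wind(C,\cdot)\) along \(\Gamma\). More usefully, the signed count of crossings of \(C\) by \(\gamma_+\) equals
\[
\wind(C,\text{far point})-\wind(C,z)=0-\wind(C,z).
\]
Only finitely many crossings occur, because \(\gamma_+\) is eventually outside all \(D_j\). Genericity holds: \(\alpha_j\) passes through centers of unit squares and crosses lattice edges transversally at midpoints, and lattice vertices avoid \(C\). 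So this count is \(\not\equiv0\pmod m\).

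\emph{Combinatorial count.} Decompose the crossings of \(\gamma_+\) with \(\alpha_j\) according to visits of \(\gamma_+\) to \(D_j\). Since \(\alpha_j\) lies in the interior of \(D_j\), every crossing of \(\alpha_j\) occurs during some maximal sojourn of \(\gamma\) in \(D_j\). Each complete sojourn, apart possibly from the initial one, enters through a boundary leg and exits through one. Because routing inside \(D_j\) depends only on the restricted environment, such a sojourn is exactly a boundary path of the finite routing in \(D_j\), and its intersection with \(\alpha_j\) is divisible by \(m\). Sojourns are disjoint as state sets, so this holds sojourn by sojourn even when a boundary path is revisited. It holds also when the \(D_j\) overlap, since each \(\alpha_j\) is counted only against visits to its own \(D_j\). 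The starting point \(z\) is outside every \(D_j\), so there is no initial partial sojourn. The sojourns are also not internal cycles: a state of an internal cycle is never reached from the boundary, by Lemma~\ref{lem:decomposition}.

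\emph{Reduction to the signed sum.} Summing over sojourns and over \(j\) makes the total crossing number divisible by \(m\). The hypothesis speaks of unsigned multiplicity, while the topological count is signed. On the Manhattan lattice, however, each vertical edge has a fixed direction, and all crossings of a horizontal connector across a given edge therefore carry the same sign. Write the per-edge counts as \(n_e\), so that the unsigned total over \(D_j\) is \(\sum_e n_e\) and the signed total is \(\sum_e \epsilon_e n_e\).

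The main obstacle is that divisibility of \(\sum_e n_e\) need not imply divisibility of \(\sum_e\epsilon_e n_e\) for general \(m\). The first thing to try is to restrict to what the paper actually uses. There \(m=2\) and the condition is per boundary path, so \(\epsilon_e n_e\equiv n_e \pmod 2\) and the parities agree. For general \(m\), I would interpret ``intersection'' in the statement as signed algebraic intersection, and then the argument goes through verbatim.

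Combining the two counts gives \(\wind(C,z)\in m\Z\), contradicting the hypothesis. Hence every trajectory from \(z\) is bounded, and by Lemma~\ref{lem:decomposition} applied to a large box it is periodic.
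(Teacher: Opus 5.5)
Your proposal is correct and is essentially the paper's argument: cut a trajectory from \(z\) to the unbounded exterior into complete boundary excursions of the \(D_j\), add the per-excursion intersections with the \(\alpha_j\) to get \(I(C,\gamma)\in m\Z\), and compare this with \(\wind(C,z)\) through the jump of the winding number at each crossing of \(C\). You also note that this comparison uses the \emph{signed} count, so the unsigned per-path hypothesis suffices only for \(m=2\) (the case used in Theorem~\ref{thm:criterion}); this makes explicit a point the paper's proof leaves implicit, and the closing curve \(\Gamma\) can be dropped, since the segment identity you state right after it is all that is needed.
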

\begin{proof}
If such a trajectory is unbounded, take a finite segment \(\gamma\)
from \(z\) to the unbounded exterior of a disk containing all witnesses.
Its portions in each \(D_j\) are complete boundary excursions.
By Lemma~\ref{lem:decomposition}, it cannot enter an internal cycle.
Consequently \(I(\alpha_j,\gamma)\in m\Z\) for every \(j\), with
multiplicity if there are several excursions. Additivity gives
\(I(C,\gamma)\in m\Z\).

On the other hand, the intersection of a closed polygonal walk with
a path from \(z\) to its unbounded exterior is, up to the choice of
sign convention, \(\wind(C,z)\). To see this, the winding number is
locally constant off \(C\), is zero in the unbounded component, and
changes by the signed crossing at each transverse intersection.
Summing these changes along \(\gamma\) gives the identity.
It contradicts the hypothesis. Small transverse perturbations resolve
coincidences at corners without changing either parity or winding.
\end{proof}

\begin{proof}[Proof of Theorem~\ref{thm:criterion}]
Partition the physical vertices into disjoint \(L\times L\) tiles,
indexed by \(\Z^2\). Use tile centers as the vertices of a coarse
square lattice. A coarse edge is open when every boundary path in
the rectangle consisting of its two endpoint tiles has even
intersection with their center-to-center connector.

The entire family of edge variables supported on a set of coarse
vertices is measurable with respect to the bits in those tiles.
Thus edge families with disjoint incident vertex sets are independent.
This proves the full \(1\)-independence required by
Balister--Johnston--Savery--Scott~\cite{BalisterEtAl}, rather than just
pairwise independence.

Translations by multiples of even \(L\) preserve the street phases.
The map
\[
 (x,y)\longmapsto(L-1-y,x)
\]
takes the horizontal witness to the vertical witness and reverses
every street arrow relative to our convention. Reversing all routes
restores the convention and preserves intersection parity.
It also preserves the product law on mirror bits.
Every coarse edge therefore has marginal probability \(h_L(p)\).

The surrounding-circuit conclusion of \cite{BalisterEtAl}
(Corollary~5.1 in the cited arXiv version) applies at
\(h_L(p)\ge0.8457\), including equality. Almost surely every bounded
set is enclosed by an open coarse cycle. For a fixed physical
starting state, enclose a sufficiently large neighborhood of it, so
that all two-tile witnesses along the cycle avoid the starting point.
The center connectors of a simple enclosing cycle have winding
\(\pm1\). Lemma~\ref{lem:blocking}, with \(m=2\), proves boundedness.
A countable intersection handles every starting state.
Finally, a bounded orbit of the bijective routing map is periodic.
\end{proof}

\section{Tensor normalization and directed continuations}\label{sec:tensor}

We represent the indicator, and hence the probability, of boundary
parity without solving a separate system for each environment.
All tensor contractions in this section are ordinary finite sums.
The third color implements the required signs through explicit local
coefficients.

\subsection{Local signs for closed curves}

Use colors \(0,1,\ff\), with standard basis
\(e_0,e_1,e_\ff\) of \(\R^3\). A color follows a routed strand.
At a vacant site put a minus sign if both crossing strands have color
\(\ff\). At a mirror put a minus sign for each \(\ff\)-colored
turn through the angular branch cut between east and south.
The latter turns occur precisely at sites with \(x\) odd and
\(y\) even.

The following elementary planar identity explains this convention.

\begin{lemma}[Loop sign]\label{lem:loop-sign}
Let a finite collection of oriented closed immersed curves in the
plane have only transverse double crossings, and let \(q\) be its
number of parameter-circle components. If \(C\) is the total number
of crossings and \(\mathrm{Rot}\) the sum of their tangent rotation
numbers (total turning divided by \(2\pi\)), then
\begin{equation}\label{eq:rotation-parity}
 C+\mathrm{Rot}\equiv q\pmod2.
\end{equation}
Consequently the local signs just specified have product
\((-1)^q\) on a collection of \(q\) closed \(\ff\)-colored
Manhattan trajectories.
\end{lemma}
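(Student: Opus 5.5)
The proof has two parts. The first is the planar congruence \eqref{eq:rotation-parity}, which I would prove by smoothing crossings. The second converts that congruence into the local sign rule for Manhattan trajectories.

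For \eqref{eq:rotation-parity}, induct on the number \(C\) of transverse double crossings. If \(C=0\), the components are disjoint simple closed curves, so each has rotation number \(\pm1\) by Whitney's theorem for embedded curves (the classical Umlaufsatz~\cite{Whitney}). Then \(\mathrm{Rot}\equiv q\pmod2\) and the claim holds. If \(C>0\), perform the orientation-respecting smoothing at one crossing. This does not change the total turning, since the local replacement only alters a small neighbourhood and the two new arcs enter and exit in the same directions. It lowers \(C\) by one. It changes \(q\) by exactly \(\pm1\): a self-crossing of one component splits it into two, and a crossing between two distinct components merges them into one. Hence \(C+q\) keeps its parity, \(\mathrm{Rot}\) is unchanged, and the induction closes. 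A small perturbation makes any tangential or non-generic situation transverse without changing rotation numbers or crossing counts, so the hypotheses are harmless.

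For the consequence, take \(q\) closed trajectories, all coloured \(\ff\). Smooth each Manhattan trajectory slightly at mirror sites, so it becomes an immersed curve whose only crossings are at vacant sites where two \(\ff\)-strands cross. These are double crossings because each site has exactly two passages. Two strands passing through the same vacant site are transverse, since one is horizontal and one is vertical. A mirror site gives a turn, not a crossing: its two passages are rerouted into two non-crossing corners. The vacant-site signs therefore contribute \((-1)^C\).

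It remains to show that the mirror-site signs contribute \((-1)^{\mathrm{Rot}}\). For each component, total turning divided by \(2\pi\) counts the number of times the tangent direction passes a fixed angular reference ray, signed. Take the reference ray to be the branch cut between east and south. Each quarter-turn at a mirror either crosses that cut or not, and modulo 2 the count of crossings of the cut equals the rotation number. The parity of the signed count agrees with the unsigned count. The paper asserts that the turns crossing the east–south cut occur exactly at sites with \(x\) odd and \(y\) even. I would verify this from the street orientations \(H_y=((-1)^y,0)\) and \(V_x=(0,(-1)^x)\): a turn between heading east and heading south requires an east-pointing row, so \(y\) is even, and a south-pointing column, so \(x\) is odd. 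Both directions of turn, east to south and south to east, are possible at such sites, and each crosses the cut once. Multiplying everything gives the product \((-1)^{C+\mathrm{Rot}}=(-1)^q\).

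I expect the main obstacle to be the smoothing step in the induction, specifically showing that the total turning is preserved and that \(q\) changes by exactly one. This needs care with orientations: the smoothing must respect them so that the new arcs are consistently oriented curves. I would handle it by writing the local picture explicitly as two oriented arcs through a small disk.
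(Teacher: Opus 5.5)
Your proof is correct and follows essentially the same route as the paper. You smooth crossings down to embedded circles of rotation number $\pm1$, then round the mirror turns so that the vacant-site signs give $(-1)^C$ and the east--south branch-cut turns, which you correctly locate at $x$ odd and $y$ even, give $(-1)^{\mathrm{Rot}}$. One small point: matching entry and exit directions fixes the total turning of the new arcs only modulo $2\pi$, so the two smoothing arcs must be chosen with opposite signed turning ($+\theta$ and $-\theta$), as your planned explicit local picture would do.
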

\begin{proof}
Resolve one crossing by the orientation-preserving smoothing. This
changes the number of components by one modulo two and removes one
crossing. The two small connecting arcs can be chosen so that their
total tangent turning agrees with that of the original arcs; hence
the total rotation number is unchanged. After all crossings are
resolved, each component is an embedded oriented circle, with
rotation number \(+1\) or \(-1\). This proves
\eqref{eq:rotation-parity}. It is the parity form of the classical
rotation-number relation; see Whitney~\cite{Whitney}.

Round each mirror turn in a small neighborhood of its site, leaving
vacant passages as transverse crossings. The parity of crossings of
a generic tangent-angle branch cut equals the rotation number modulo
two. Our chosen cut lies strictly between east and south, and is
crossed exactly by the stated mirror turns. The vacant-site signs
supply \((-1)^C\), and the turn signs supply
\((-1)^{\mathrm{Rot}}\). Their product is \((-1)^q\).
\end{proof}

Distinct routed curves never share a directed edge. The rounded
curves therefore satisfy the hypotheses after arbitrarily small
local perturbations. If open curves are also present in a planar
disk and have both endpoints on its boundary, a closed curve has an
even total number of transverse intersections with each open curve.
Indeed, its mod-two winding number is zero at both endpoints, and
each intersection changes that number. We will use this observation
when cut legs are left free.

\subsection{The tensor and the boundary normalization}

Write the four geometric leg colors at a site as \((w,s,e,n)\),
in west, south, east, north order. Set
\(\eta_{x,y}=\ind_{\{x\text{ odd},\ y\text{ even}\}}\).
The vacant and mirror tensors are
\begin{align}
 T_0(w,s,e,n)
 &=(-1)^{\ind_{\{w=\ff\}}\ind_{\{s=\ff\}}}
   \delta_{w,e}\delta_{s,n},\label{eq:T0}\\
 T_1^{x,y}(w,s,e,n)
 &=\begin{cases}
   \delta_{w,n}\delta_{s,e},&x+y\text{ even},\\
   (-1)^{\eta_{x,y}(\ind_{\{w=\ff\}}+\ind_{\{e=\ff\}})}
      \delta_{w,s}\delta_{e,n},&x+y\text{ odd}.
  \end{cases}\label{eq:T1}
\end{align}
On each marked edge insert
\[
 Z=\diag(1,-1,-1),
\]
and place the vector \(v=(e_0+e_1)/\sqrt2\) at every exterior leg.
Contract colors on each internal edge with the usual Euclidean pairing.

\begin{theorem}[Exact normalized representation]\label{thm:tensor}
For every fixed physical environment, the contraction of
\eqref{eq:T0}--\eqref{eq:T1}, with the specified markers and boundary
vectors, equals \(\ind_{H(D,\alpha)}\). Replacing the tensor at
each site by
\begin{equation}\label{eq:averaged-tensor}
 T_p^{x,y}=(1-p)T_0+pT_1^{x,y}
\end{equation}
therefore gives \(\Pp(H(D,\alpha))\).
Each internal cycle has total weight exactly one, independently of
its intersection parity.
\end{theorem}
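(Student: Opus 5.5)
For a fixed environment, the tensors \eqref{eq:T0}--\eqref{eq:T1} contain only Kronecker deltas. A nonzero term of the contraction is therefore a coloring of the routed strands in which every strand, open or closed, carries a single color. Lemma~\ref{lem:decomposition} tells us the strands are exactly the boundary paths and the internal cycles, so the contraction becomes a sum over colorings \(\sigma\) assigning a color in \(\{0,1,\ff\}\) to each strand. Each term is a product of four kinds of factors: the boundary factors \(v_{\sigma}^2\), the marker factors \(Z\), the vacant-crossing signs, and the mirror-turn signs. I will show that this sum factorizes over strands, up to interaction signs that cancel.

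The \(\ff\) color must be excluded from boundary paths. Since \(v\) has no \(\ff\) component, any boundary path colored \(\ff\) contributes zero. A boundary path colored \(0\) picks up \(v_0^2=1/2\) from its two ends and no marker sign. A boundary path colored \(1\) picks up \(1/2\) times \((-1)^{\#\text{marked crossings}}\). Summing the two colors gives \(\tfrac12(1+(-1)^{I})\), where \(I\) is the path's intersection with \(\alpha\), and this is the indicator that the path is even.

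Internal cycles can take any color, and I must show their total weight is one. A cycle colored \(0\) gives \(1\), and a cycle colored \(1\) gives \((-1)^{I}\). The remaining task is to show that a cycle colored \(\ff\) gives \(-(-1)^{I}\) after all interaction signs are accounted for, so that the three colors sum to \(1+(-1)^I-(-1)^I=1\). The \(Z\) entry on \(\ff\) is \(-1\), which contributes \((-1)^I\).

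The main obstacle is controlling the sign factors, which are not local to a single strand: a vacant-site sign appears whenever two \(\ff\) strands cross, possibly two different cycles, or a cycle with itself. Because boundary paths never carry \(\ff\), all \(\ff\) strands are closed. For a set of \(q\) \(\ff\)-colored cycles, Lemma~\ref{lem:loop-sign} gives total sign \((-1)^q\) from vacant crossings and turn cuts together. This total includes the crossings between distinct cycles, and it factorizes as \(\prod(-1)\) over the chosen cycles. Each \(\ff\) cycle therefore contributes \(-1\) times its \(Z\) sign \((-1)^I\), as required.

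I will also check two local points. First, the mirror-turn sign pattern \(\eta_{x,y}\) does mark exactly the east–south branch-cut turns under the Manhattan street directions; this is a short case check on the parity of \((x,y)\). Second, once every strand has a fixed color, the weight is a product of per-strand weights, so the sum over colorings factorizes. The contraction then equals \(\prod_{\text{paths}}\ind\{\text{even}\}\cdot\prod_{\text{cycles}}1=\ind_{H(D,\alpha)}\).

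The probabilistic statement follows by linearity. The environment bits are independent, with one tensor per site, so averaging the indicator amounts to replacing each site tensor by \((1-p)T_0+pT_1^{x,y}\) in the multilinear contraction.
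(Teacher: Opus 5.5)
Your proposal is correct and follows essentially the same route as the paper's proof. Colors are constant along routed strands, and the boundary vectors exclude \(\ff\) on boundary paths while giving \(v^\top Z^k v=\tfrac12(1+(-1)^k)\). Lemma~\ref{lem:loop-sign}, applied to the whole set of \(\ff\)-colored cycles, gives the factorized sign \((-1)^q\), so each internal cycle contributes \(1+(-1)^k-(-1)^k=1\). The passage to \(\Pp(H(D,\alpha))\) by multilinearity over the independent site bits is also the paper's argument.
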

\begin{proof}
For a fixed environment, colors are constant along routed strands
before marker factors are applied. Boundary vectors exclude color
\(\ff\) on every boundary path. A boundary path with \(k\) marked
traversals contributes
\[
 v^\top Z^k v=\frac{1+(-1)^k}{2}.
\]
Thus each path contributes its own even-parity indicator, with no
normalization depending on its length.

An internal cycle with \(k\) marked traversals has ordinary-color
contribution \(1+(-1)^k\). By Lemma~\ref{lem:loop-sign}, choosing
color \(\ff\) on that cycle contributes \(-(-1)^k\).
More precisely, for every subset of cycles colored \(\ff\), the
joint local sign is the product of one minus sign per chosen cycle;
crossings between chosen cycles introduce no further dependence.
Summing all choices therefore gives one per internal cycle.
This proves the fixed-environment assertion. Multilinearity and
independence of the physical site bits give
\eqref{eq:averaged-tensor}. In particular, the two passages at a
site are averaged together through the same tensor choice.
\end{proof}

This is a finite three-color implementation of the loop cancellation
mechanism appearing in~\cite{MartinsNienhuisRietman,JacobsenReadSaleur,NahumEtAl}.
The normalization in Theorem~\ref{thm:tensor} specifies the particular
boundary event used here.

For computation, change the ordinary-color basis orthogonally by
\[
 e_0\longmapsto\frac{e_0+e_1}{\sqrt2},\qquad
 e_1\longmapsto\frac{e_0-e_1}{\sqrt2},\qquad
 e_\ff\longmapsto e_\ff.
\]
The pairing tensors \eqref{eq:T0}--\eqref{eq:T1} are unchanged: each
ordinary-color pairing is invariant under this simultaneous orthogonal
change, and the \(\ff\) sector is fixed. Boundary vectors become
\(e_0\), and the edge marker becomes the signed permutation
\begin{equation}\label{eq:X}
 Xe_0=e_1,\qquad Xe_1=e_0,\qquad Xe_\ff=-e_\ff.
\end{equation}
We use this basis from now on. It eliminates all square roots from
the contraction.

\subsection{An explicit column sweep}
For \(p=2/5\), abbreviate \(Z_L=Z_L(2,5)\) in \eqref{eq:intro-integer}.

Sweep \(D_L\) through columns \(x=0,\ldots,2L-1\), processing
\(y=0,\ldots,L-1\) within each column. Keep one horizontal color at
each height and one vertical carry color. The temporary vector has
\(3^{L+1}\) coordinates. At a column entrance, the carry is \(0\).
Before processing \((x,y)\), apply \(X\) to the carry exactly when
\[
 c\le x<3c,\qquad y=c.
\]
This inserts the marker on the vertical edge that the sweep is about
to cross, regardless of the physical orientation of that edge.

At \(p=2/5\), fix all unaffected coordinates and write \(V[a,d]\)
for the two-index slice consisting of the horizontal input color and
the carry. For \(b,t\in\{0,1,\ff\}\), the integer tensor
\(3T_0+2T_1^{x,y}\) gives
\begin{equation}\label{eq:local-sweep}
 V'[b,t]=3(-1)^{\ind_{\{b=\ff\}}\ind_{\{t=\ff\}}}V[b,t]
 +\begin{cases}
   2V[t,b],&x+y\text{ even},\\[2pt]
   2\ind_{\{b=t\}}s_b\displaystyle\sum_a s_a V[a,a],
       &x+y\text{ odd},
  \end{cases}
\end{equation}
where \(s_a=(-1)^{\eta_{x,y}\ind_{\{a=\ff\}}}\).
At the top of each column project the carry onto \(0\), and reset
the next column's bottom carry to \(0\). Initially all horizontal
colors are \(0\); finally select their all-zero coordinate.

\begin{corollary}[Integer evaluation]\label{cor:integer}
Starting with the all-zero coordinate equal to one, the sweep
\eqref{eq:local-sweep} produces an integer \(Z_L\) satisfying
\eqref{eq:intro-integer}. It uses \(O(L^2 3^{L+1})\) arithmetic
operations and \(O(3^{L+1})\) scalar storage.
\end{corollary}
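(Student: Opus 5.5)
The plan is to show that the column sweep is just a particular order of contracting the network of Theorem~\ref{thm:tensor}, written in the rotated basis where boundary vectors are \(e_0\) and markers are \(X\). Integrality and the operation count then follow by inspection. There are three steps.

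\emph{Contraction order.} The sweep state after processing some sites is indexed by the \(L\) horizontal colors on the current staircase cut and by one vertical carry. This is exactly the set of open edges of the partially contracted network. The initial all-zero coordinate equal to one encodes the boundary vectors \(e_0\) on the west legs. Resetting the carry to \(0\) at each column entrance encodes \(e_0\) on the south legs, and projecting it onto \(0\) at the top encodes the north legs. Selecting the final all-zero coordinate encodes the east legs. Applying \(X\) to the carry before \((x,y)\) with \(c\le x<3c\), \(y=c\) places the marker on the vertical edge between rows \(c-1\) and \(c\), which are the marked edges of \(\alpha_L\). Because \(X\) is a fixed matrix inserted on an undirected contracted edge, its placement does not depend on the street orientation of that edge.

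\emph{The local update.} I will check that \eqref{eq:local-sweep} equals \(5T_{2/5}^{x,y}=3T_0+2T_1^{x,y}\) applied to the slice, with the west leg identified with the horizontal input \(a\), the south leg with the incoming carry, the east leg with the new horizontal color \(b\), and the north leg with the new carry \(t\). For \(T_0\), the factors \(\delta_{w,e}\delta_{s,n}\) and the sign \((-1)^{[w=\ff][s=\ff]}\) give \(3(-1)^{[b=\ff][t=\ff]}V[b,t]\). For \(x+y\) even, \(\delta_{w,n}\delta_{s,e}\) gives \(2V[t,b]\). For \(x+y\) odd, \(\delta_{w,s}\delta_{e,n}\) forces \(b=t\) and sums over the diagonal \(a\). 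The sign factorizes as \(s_as_b\), giving the stated term. This index matching is the step I expect to be the main obstacle: one must confirm the leg conventions of \eqref{eq:T0}--\eqref{eq:T1} against the sweep's slice convention. Since the tensors are defined on geometric legs, this reduces to careful bookkeeping rather than any orientation argument.

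\emph{Conclusion.} By Theorem~\ref{thm:tensor} and multilinearity, the full contraction of the \(T_{2/5}\) tensors is \(h_L(2/5)\). The sweep contracts the \(2L^2\) site tensors, each scaled by \(5\), so its output is \(Z_L=5^{2L^2}h_L(2/5)\), which is \eqref{eq:intro-integer}. The result is an integer because every update \eqref{eq:local-sweep}, every application of \(X\), and every projection has integer coefficients, and the initial vector is integral. For cost, each site update touches every coordinate of a \(3^{L+1}\)-vector with \(O(1)\) work per coordinate. The diagonal sum \(\sum_a s_aV[a,a]\) costs \(O(1)\) per fixed setting of the other coordinates. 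With \(2L^2\) sites this gives \(O(L^2 3^{L+1})\) operations, and only two such vectors need to be stored, giving \(O(3^{L+1})\) storage.
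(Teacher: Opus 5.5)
Your proposal is correct and follows the same route as the paper. Like the paper, you identify the sweep as a contraction order of the network of Theorem~\ref{thm:tensor} in the rotated basis, match \eqref{eq:local-sweep} with \(3T_0+2T_1^{x,y}\), and then read off the factor \(5^{2L^2}\), integrality, and the \(O(3^{L+1})\) cost per site. Your case-by-case index check is simply more explicit than the paper's one-line assertion, and your claim that the marker placement is independent of orientation rests on \(X=X^\top\), which holds for \eqref{eq:X}.
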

\begin{proof}
Each update is the contraction of the next local tensor, and each
marker and boundary leg is inserted exactly once. There are
\(2L^2\) site tensors, each with denominator five in
\eqref{eq:averaged-tensor}. Theorem~\ref{thm:tensor} gives the
claimed scalar. A fixed-size local slice needs a bounded number of
operations, which gives the stated counts.
\end{proof}

The operation count is in arithmetic operations; integer bit lengths
increase with the number of sites. Both the exact sweep and its
uncompressed fixed-point version retain exponential storage in \(L\).

\subsection{The exact continuation at a cut}

Consider a rectangle, or a simply connected staircase region left by
the column sweep. Some boundary legs are free cut legs; all other
boundary legs carry \(e_0\). Group the free legs according to whether
their physical arrows point into or out of the remaining region.
In the basis \eqref{eq:X}, its continuation tensor becomes a matrix
\(W\), with one index for each color assignment to the incoming
group and one for each assignment to the outgoing group. Either
choice of which group indexes rows is permitted, provided it is used
consistently.

A \emph{signed partial permutation matrix} has entries in
\(\{0,1,-1\}\) and at most one nonzero entry in each row and column.
It can be rectangular or identically zero.

\begin{theorem}[Directed continuation]\label{thm:continuation}
For every fixed physical environment in the remaining region,
\(W\) is a signed partial permutation matrix. For independent real
Bernoulli parameters in \([0,1]\), their averaged continuation
satisfies
\begin{equation}\label{eq:substochastic}
 \sum_j|W_{ij}|\le1\quad\hbox{for every }i,
 \qquad
 \sum_i|W_{ij}|\le1\quad\hbox{for every }j.
\end{equation}
In particular, \(\|W\|_{2\to2}\le1\).
\end{theorem}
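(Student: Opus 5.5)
The plan is to prove the fixed-environment statement first and then obtain \eqref{eq:substochastic} by averaging. Fix an environment in the remaining region $R$ and apply Lemma~\ref{lem:decomposition} to $R$, treating both the free cut legs and the $e_0$-legs as boundary legs. The directed states then split into three families: routes from boundary inputs to boundary outputs, and internal cycles.

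The contraction factorizes over these strands, because at every site each tensor in \eqref{eq:T0}--\eqref{eq:T1} is a product of Kronecker deltas pairing an incoming leg with its routed outgoing leg. Color is therefore constant along each strand, up to the markers $X$ and the local signs. The only coupling between strands comes from the signs at vacant crossings and at mirror turns.

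I will handle the strands in the following order.

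\emph{Internal cycles.} As in the proof of Theorem~\ref{thm:tensor}, the sum over colorings of the internal cycles gives one per cycle. This requires the remark after Lemma~\ref{lem:loop-sign}: a closed curve meets each open strand an even number of times. This holds because the open strands have both endpoints on $\partial R$, and $R$ is simply connected, being a rectangle or a staircase. Consequently, crossing signs between an $\ff$-colored cycle and an $\ff$-colored open strand cancel in pairs. Crossings with ordinary-colored strands carry no sign. The cycle sum is thus independent of the open-strand colors and equals $1$.

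\emph{Routes between two $e_0$-legs.} Such a route contributes $e_0^\top X^k e_0\in\{0,1\}$, a scalar factor.

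\emph{Routes from a free leg to an $e_0$-leg.} The free color $a$ must satisfy $a=X^{\pm k}e_0$, so it is forced to be a specific color in $\{0,1\}$.

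\emph{Routes between two free legs.} By the direction of travel, each such route joins one free incoming leg to one free outgoing leg. The output color equals $X^k$ applied to the input color, times a sign depending on the colors of the strands it crosses.

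It remains to show that these crossing signs depend on the color assignment in a way that still yields at most one nonzero entry per row and column. Given an incoming assignment, all outgoing colors on free-to-free routes are determined, and the free-to-$e_0$ constraints are either satisfied or the row is zero. Outgoing free legs that are joined to $e_0$-inputs have forced colors. Hence each row has at most one nonzero entry. The product of signs, whatever it depends on, is $\pm1$. Since $X$ is invertible, the same argument applied to the outgoing assignment bounds the number of nonzero entries in each column. So $W$ is a signed partial permutation matrix.

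The averaging step is then immediate. By multilinearity, the averaged $W$ is a convex combination $\sum_\omega \mathbb P(\omega)W_\omega$ over environments, with independent Bernoulli weights in $[0,1]$. The triangle inequality, applied row by row and column by column, gives \eqref{eq:substochastic}. The Schur test, $\|W\|_{2\to2}^2\le\|W\|_{1\to1}\|W\|_{\infty\to\infty}$, then gives $\|W\|_{2\to2}\le 1$.

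The main obstacle is the sign bookkeeping. One must check that the open strands never make a row's entry a sum of several terms. This is true because we never sum over colors of open strands: they are forced. The genuine danger is interaction between an $\ff$ cycle and open strands, which is handled by the even-intersection property in a simply connected region. That property is precisely where the staircase shape of the sweep cut is needed.
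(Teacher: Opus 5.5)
Your proposal is correct and follows essentially the same argument as the paper. You decompose the fixed routing into open strands (free--free, free--fixed, fixed--fixed) and internal cycles. Each open strand transports colors by a signed permutation, so an incoming assignment, or an outgoing one followed backwards, has at most one compatible partner, with entry $\pm1$. The even intersection of any closed cycle with an open strand in the simply connected region reduces each internal-cycle sum to one, as in Theorem~\ref{thm:tensor}. Averaging, the triangle inequality, and the row--column norm bound then finish the proof, exactly as in the paper.
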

\begin{proof}
Decompose a fixed routing into open strands and internal cycles.
An open strand may connect two cut legs, a cut leg to a fixed exterior
leg, or two fixed exterior legs. Every strand connects a physical
input to a physical output. Its color transport, including markers,
is a signed permutation of the three colors.

An assignment to all incoming cut legs therefore determines at most
one compatible assignment to the outgoing cut legs. Strands ending
at fixed exterior legs may disallow this assignment. The same
argument with paths followed backwards proves uniqueness in each
column. If compatible, the open-strand contribution has magnitude one.

It remains to sum internal cycles in the presence of possibly
\(\ff\)-colored open strands. Every closed cycle has even total
intersection with each such open strand, because both endpoints of
the open strand are on the boundary of the disk. Its vacant-crossing
sign with that strand is consequently \(+1\). Lemma~\ref{lem:loop-sign}
and the proof of Theorem~\ref{thm:tensor} then give weight one for
every internal cycle, also in this partially contracted region.
Thus the surviving matrix entries are exactly \(\pm1\).

Average over the actual nonnegative product weights of the remaining
site variables, which sum to one. The triangle inequality gives
\eqref{eq:substochastic}. Finally, the usual row-and-column norm
bound, or Cauchy--Schwarz applied to \((|W_{ij}|)\), gives
\(\|W\|_{2\to2}^2\le\|W\|_1\|W\|_\infty\le1\).
\end{proof}

The grouping by physical direction is essential to the statement.
Grouping neighboring tensor indices without regard to their arrows
need not produce a partial permutation.

\section{Residual error bounds}\label{sec:certificates}

\subsection{Matching and rational covering potentials}

For a finite real or complex matrix \(A\), define
\begin{equation}\label{eq:matching}
 \cM(A)=\max_M\sum_{(i,j)\in M}|A_{ij}|,
\end{equation}
where the maximum is over partial matchings of row and column indices:
no row or column is used more than once. This is the usual weighted
bipartite matching problem~\cite{Birkhoff,Kuhn}.

\begin{lemma}[Residual bound]\label{lem:matching}
If \(W\) satisfies \eqref{eq:substochastic}, then
\begin{equation}\label{eq:matching-bound}
 \left|\sum_{i,j}W_{ij}A_{ij}\right|\le\cM(A).
\end{equation}
If nonnegative numbers \(a_i,b_j\) satisfy
\begin{equation}\label{eq:cover}
 a_i+b_j\ge |A_{ij}|\qquad\hbox{for all }i,j,
\end{equation}
then
\begin{equation}\label{eq:cover-bound}
 \cM(A)\le\sum_i a_i+\sum_j b_j.
\end{equation}
For a matrix with \(r\) rows and \(s\) columns, writing
\(k=\min(r,s)\), one also has
\begin{equation}\label{eq:simple-bounds}
 \cM(A)\le
 \min\left\{\sum_{i,j}|A_{ij}|,\ \sqrt{k}\|A\|_F,
                      \ k\max_{i,j}|A_{ij}|\right\}.
\end{equation}
\end{lemma}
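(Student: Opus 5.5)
The plan is to treat the three assertions of Lemma~\ref{lem:matching} separately, using the Birkhoff--von Neumann structure of the constraint set \eqref{eq:substochastic}.

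\emph{Bound \eqref{eq:matching-bound}.} Put \(B_{ij}=|W_{ij}|\). Then \(|\sum_{i,j}W_{ij}A_{ij}|\le\sum_{i,j}B_{ij}|A_{ij}|\), and \(B\) is nonnegative with row and column sums at most one. Pad \(B\) to a square matrix of size \(r+s\) that is doubly substochastic, with zero entries off the original block. A doubly substochastic matrix is dominated entrywise by a doubly stochastic matrix; one can add slack to rows and columns with deficits, and the padding makes this possible. By Birkhoff's theorem~\cite{Birkhoff}, that doubly stochastic matrix is a convex combination of permutation matrices. Restricting each permutation to the original \(r\times s\) block gives a partial matching \(M\). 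Hence
\[
 \sum_{i,j}B_{ij}|A_{ij}|\le\sum_\pi\lambda_\pi\sum_{(i,j)\in M_\pi}|A_{ij}|\le\cM(A).
\]
Alternatively, I would argue by LP: the linear functional \(B\mapsto\sum B_{ij}|A_{ij}|\) attains its maximum over the polytope \eqref{eq:substochastic} at a vertex. That polytope has totally unimodular constraints, so its vertices are \(0/1\) partial permutation matrices. The only real care needed is the padding step, which is the main technical point, and it is standard.

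\emph{Cover bound \eqref{eq:cover-bound}.} This is weak duality, as in K\"onig--Egerv\'ary and Kuhn~\cite{Kuhn}. For any matching \(M\),
\[
 \sum_{(i,j)\in M}|A_{ij}|\le\sum_{(i,j)\in M}(a_i+b_j)\le\sum_i a_i+\sum_j b_j,
\]
because each row and each column appears at most once in \(M\) and the potentials are nonnegative. Maximizing over \(M\) gives \eqref{eq:cover-bound}.

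\emph{Simple bounds \eqref{eq:simple-bounds}.} A matching has at most \(k=\min(r,s)\) edges. The \(\ell^1\) bound holds because a matching selects a subset of the entries. The bound \(k\max_{i,j}|A_{ij}|\) follows because each of at most \(k\) terms is at most the maximum. For the Frobenius bound, Cauchy--Schwarz over at most \(k\) terms gives
\[
 \sum_{M}|A_{ij}|\le\sqrt{k}\Bigl(\sum_M|A_{ij}|^2\Bigr)^{1/2}\le\sqrt{k}\|A\|_F.
\]

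I expect no serious obstacle. The only step needing attention is the reduction from the doubly substochastic rectangular case to Birkhoff's theorem, and the padding or LP-vertex argument handles it. Complex entries cause no trouble, since only the moduli \(|A_{ij}|\) enter after the first triangle inequality.
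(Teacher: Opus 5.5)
Your proof is correct and follows the paper's structure: \eqref{eq:matching-bound} reduces to the fact that nonnegative matrices with row and column sums at most one are convex combinations of partial permutation matrices, \eqref{eq:cover-bound} is weak duality summed along a matching, and \eqref{eq:simple-bounds} follows from counting at most \(k\) matched entries and Cauchy--Schwarz. The only difference is how the vertex fact is justified: the paper proves it directly for the rectangular polytope by an alternating perturbation along a cycle or a leaf-to-leaf path in the fractional support, while you import it from Birkhoff's theorem via a square doubly stochastic dilation (or from total unimodularity), which is shorter but relies on those cited results.
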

\begin{proof}
Put \(Y_{ij}=|W_{ij}|\). The polytope of nonnegative rectangular
matrices with row and column sums at most one has only partial
permutation matrices as extreme points. For completeness, in the
support of the strictly fractional entries of a putative fractional
extreme point, a cycle admits an alternating positive and negative
perturbation that preserves every row and column sum. If there is no
cycle, a nontrivial tree component has two leaves. Each leaf has
strictly positive unused capacity: its only nonzero incident entry
is fractional, since an incident entry equal to one would exclude all
others. The path between the leaves admits a small alternating
perturbation, preserving internal sums and respecting the endpoint
capacities. In both cases the point is not extreme. This proves the
assertion, including an isolated fractional edge as a one-edge path.

Maximizing the linear functional
\(\sum Y_{ij}|A_{ij}|\) over this polytope now gives
\eqref{eq:matching-bound}. Summing \eqref{eq:cover} along a matching,
using each nonnegative potential at most once, proves
\eqref{eq:cover-bound}. The first and third bounds in
\eqref{eq:simple-bounds} follow immediately, and the second follows
from Cauchy--Schwarz on the at most \(k\) matched entries.
\end{proof}

The matching value is the exact support bound for the row-and-column
relaxation \eqref{eq:substochastic}. We do not assert that every
matrix in that relaxation can be realized by a physical continuation.
The nonnegativity of both sets of potentials in \eqref{eq:cover}
is needed for a certificate on partial matchings.

\begin{theorem}[Certified approximate contraction]\label{thm:certificate}
Let \(A_1,\ldots,A_N\) be the exact normalized transfer maps of a
planar column or staircase sweep, including its exact marker and
boundary operations. Suppose the initial vector \(v_0\) is exact,
and arbitrary approximate vectors satisfy
\[
 \widehat v_0=v_0,\qquad
 r_i=\widehat v_i-A_i\widehat v_{i-1}.
\]
At the \(i\)-th cut, reshape \(r_i\) into the matrix \(R_i\) by
physical direction. Let \(z\) be the exact final contraction and
\(\widehat z\) the same final boundary functional applied to
\(\widehat v_N\). Then
\begin{equation}\label{eq:certificate}
 |z-\widehat z|\le\sum_{i=1}^N\cM(R_i).
\end{equation}
In particular, nonnegative rational potentials
\(a^{(i)},b^{(i)}\) covering each exact rational residual as in
\eqref{eq:cover} certify
\begin{equation}\label{eq:certificate-dual}
 |z-\widehat z|
 \le\sum_i\left(\sum_j a_j^{(i)}+\sum_k b_k^{(i)}\right).
\end{equation}
\end{theorem}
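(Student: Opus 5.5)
The plan is a telescoping argument. Write \(\Phi_i=\ell\circ A_N\cdots A_{i+1}\) for the exact continuation functional after the \(i\)-th cut, where \(\ell\) is the final boundary functional, so that \(\Phi_N=\ell\) and \(z=\Phi_0(v_0)\). Since \(\widehat v_0=v_0\), we have
\begin{equation*}
 \widehat z-z=\Phi_N(\widehat v_N)-\Phi_0(\widehat v_0)
 =\sum_{i=1}^N\bigl(\Phi_i(\widehat v_i)-\Phi_{i-1}(\widehat v_{i-1})\bigr)
 =\sum_{i=1}^N\Phi_i(r_i).
\end{equation*}
The last equality uses \(\Phi_{i-1}=\Phi_i\circ A_i\) and linearity. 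The approximate vectors \(\widehat v_i\) never need to be close to anything; only the exact residuals enter.

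Next, identify each \(\Phi_i\) with a continuation matrix. After the \(i\)-th cut, the unprocessed part of the sweep is a simply connected staircase region. Its free legs are exactly the legs indexed by the vector at that cut, and all its other boundary legs carry \(e_0\) in the basis \eqref{eq:X}. The remaining markers, boundary projections and carry resets are included in the \(A_j\), and each of these is either a signed permutation of colors or a projection onto \(e_0\). Hence \(\Phi_i(r)=\sum_{j,k}W^{(i)}_{jk}(R_i)_{jk}\), where \(W^{(i)}\) is the averaged continuation tensor of that region, reshaped by physical direction in the same way as \(R_i\). I would check this identification carefully for the sweep's conventions: the carry and the horizontal legs must be grouped into incoming and outgoing legs according to their actual arrows, and the row/column convention must be used consistently.

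Theorem~\ref{thm:continuation} then shows that \(W^{(i)}\) satisfies \eqref{eq:substochastic}. Lemma~\ref{lem:matching} gives \(|\Phi_i(r_i)|\le\cM(R_i)\), and summing over \(i\) yields \eqref{eq:certificate}. The dual statement \eqref{eq:certificate-dual} follows from \eqref{eq:cover-bound} applied to each \(R_i\).

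The main obstacle is the middle step. One has to justify that the exact remaining operators \(A_N\cdots A_{i+1}\), including the projections of the top carry onto \(0\), the bottom resets, and the final selection of the all-zero coordinate, really are the contraction of a planar region whose exterior legs carry \(e_0\), so that Theorem~\ref{thm:continuation} applies. For this I would view each projection as an exterior boundary vector \(e_0\) and each reset as an incoming \(e_0\) leg, and note that markers inserted mid-sweep are edge factors internal to the region. The normalization \(T_p\) carries nonnegative weights summing to one, as Theorem~\ref{thm:continuation} requires.
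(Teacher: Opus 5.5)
Your proposal is correct and follows the paper's proof: the paper telescopes with the exact contracted suffixes \(w_i\) (your \(\Phi_i\)) using \(w_{i-1}^\top=w_i^\top A_i\), then applies Theorem~\ref{thm:continuation} and Lemma~\ref{lem:matching} to each summand \(w_i^\top r_i\). Your treatment of carry projections and resets as exterior \(e_0\) legs, and of markers as edge factors, spells out the identification of the suffix with a continuation tensor, which the paper asserts in a single sentence.
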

\begin{proof}
Let \(w_i\) denote the exact, fully contracted suffix beyond cut
\(i\). The identity \(w_{i-1}^\top=w_i^\top A_i\) gives
\[
 \widehat z-z=\sum_{i=1}^N w_i^\top r_i
\]
by telescoping. Each matricized suffix satisfies
\eqref{eq:substochastic} by Theorem~\ref{thm:continuation}.
Apply Lemma~\ref{lem:matching} to every summand.
\end{proof}

This certificate places no assumption on how the approximate vectors
were obtained. They may be produced by tensor-train rounding
\cite{Oseledets}, a variational method, or another proposal algorithm.
When stored rationally, their exact residuals and covering
inequalities can be checked without trusting an optimization routine.
The signed local transfer maps themselves need not be contractions
in the coefficient norm used by that routine.

\subsection{A uniform fixed-point certificate}

At a complete column cut there are \(L/2\) physically incoming and
\(L/2\) outgoing horizontal legs. The residual matrix is therefore
\(3^{L/2}\times3^{L/2}\). Between sites there is one additional
vertical carry leg, so the two group sizes are \(L/2\) and
\(L/2+1\). At every such cut the smaller matrix dimension is
\(3^{L/2}\).

\begin{corollary}[Fixed-point rounding]\label{cor:fixed-point}
Run the normalized sweep at \(p=2/5\) with every coefficient an
integer multiple of \(Q^{-1}\), where \(Q\) is a positive integer.
After each site update, divide its integer numerator by five and
truncate towards zero; carry markers and boundary projections are
exact. If \(\widehat h_L\) is the resulting final coefficient, then
\begin{equation}\label{eq:fixed-point}
 |h_L(2/5)-\widehat h_L|
 \le \frac{8L^2 3^{L/2}}{5Q}.
\end{equation}
\end{corollary}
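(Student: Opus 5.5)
The plan is to derive Corollary~\ref{cor:fixed-point} from Theorem~\ref{thm:certificate} and the elementary bound \eqref{eq:simple-bounds}. The fixed-point sweep is a sequence of approximate vectors \(\widehat v_i\), one per site update, in the normalized column sweep. Marker applications and boundary projections are exact operations. So each \(\widehat v_i\) is obtained from \(\widehat v_{i-1}\) by applying the exact normalized local map \(A_i=(3T_0+2T_1^{x,y})/5\) (composed with any exact marker or projection), up to one truncation. The telescoping identity of Theorem~\ref{thm:certificate} therefore applies with \(N=2L^2\) residuals, one per physical site. The initial vector, the all-zero coordinate equal to one, is exactly representable.

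First I would bound each residual entrywise. The integer update \eqref{eq:local-sweep} applied to integer numerators (multiples of \(Q^{-1}\)) produces an exact integer numerator \(n\), and the stored value is \(\mathrm{trunc}(n/5)/Q\). Hence every coordinate of \(r_i=\widehat v_i-A_i\widehat v_{i-1}\) has absolute value at most \(4/(5Q)\), since \(|n/5-\mathrm{trunc}(n/5)|\le 4/5\). Coordinates untouched by the local update carry zero residual. Only coordinates actually modified at the site can be nonzero, but the crude entrywise bound suffices.

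Next I would reshape \(r_i\) by physical direction into \(R_i\) at the corresponding staircase cut. By the counting stated just before the corollary, \(R_i\) has dimensions with smaller side \(k=3^{L/2}\): the group sizes are \(L/2\) and \(L/2+1\) legs, and the carry leg enters the larger group. The third bound in \eqref{eq:simple-bounds} then gives \(\cM(R_i)\le 3^{L/2}\cdot 4/(5Q)\). Theorem~\ref{thm:certificate} gives \(|z-\widehat z|\le 2L^2\cdot 4\cdot 3^{L/2}/(5Q)=8L^2 3^{L/2}/(5Q)\). By Theorem~\ref{thm:tensor} and Corollary~\ref{cor:integer}, \(z=h_L(2/5)\), because the final boundary functional selects the all-zero coordinate.

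The main obstacle is checking that the cut after each single-site update really has the staircase shape with the stated leg grouping, including cuts at column tops and bottoms where the carry is projected or reset. At a column top the carry projection is exact and can be absorbed into the next transfer map \(A_{i+1}\). I would verify that the remaining region stays simply connected, so that Theorem~\ref{thm:continuation} applies to every suffix \(w_i\). I would also confirm that the carry-leg direction pairs it with the group containing \(L/2\) horizontal legs of opposite direction. If instead it joins the smaller group, the minimum is still \(3^{L/2}\), since the dimensions are \(3^{L/2}\) and \(3^{L/2+1}\) in either order.
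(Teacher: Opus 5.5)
Your proposal is correct and follows the paper's own proof: you bound each truncation residual entrywise by \(4/(5Q)\), apply the third bound in \eqref{eq:simple-bounds} with smaller cut dimension \(3^{L/2}\), and sum over the \(2L^2\) site updates via Theorem~\ref{thm:certificate}. Your extra care in absorbing the exact markers and carry projections into adjacent transfer maps is consistent with the paper's remark that these exact operations preserve the entrywise bound.
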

\begin{proof}
Each divided integer has remainder of absolute value at most four,
so every coordinate of a local residual is at most \(4/(5Q)\) in
absolute value. Exact marker permutations and coordinate projections
preserve this bound. By \eqref{eq:simple-bounds}, its matching value
is at most \(4\cdot3^{L/2}/(5Q)\). There are \(2L^2\) site
updates. Apply Theorem~\ref{thm:certificate}.
\end{proof}

Arithmetic overflow must separately be excluded in any implementation
using bounded integers. The supplement's fixed-point implementation
checks the coefficient range before local integer combinations and
uses parameters for which those combinations fit in its integer type.

\section{Local cavity response}\label{sec:cavity}

The preceding certificates use a global continuation constraint.
We now describe an exact local simplification, keeping the original
physical Bernoulli variables throughout.

\subsection{The three cavity configurations}

Let \(v\) be a site in a finite rectangular disk \(B\). Expose the
bits in \(B\setminus\{v\}\), but leave \(\omega_v\) unread.
Remove both local transitions at \(v\). There are two cavity inputs
and two cavity outputs. Starting at each output, follow the exposed
routing until it reaches a cavity input or an exterior output of
\(B\). Such a route cannot enter an internal cycle, by uniqueness
of predecessors. Let \(k\in\{0,1,2\}\) be the number of returns
to cavity inputs. A return is a path in the punctured region; its
possible closing passage at \(v\) is specified by a filling.

Let \(S_0,S_1\) be the signed boundary scattering matrices of \(B\)
with the cavity filled by vacancy and mirror, respectively.

\begin{proposition}[Cavity classification]\label{prop:cavity}
The following alternatives exhaust all possibilities.
\begin{enumerate}
 \item If \(k=0\), two distinct boundary inputs reach the cavity,
 and two cavity outputs reach distinct boundary outputs. The two
 fillings exchange the two outgoing tails.
 \item If \(k=1\), one filling leaves one boundary path and a
 separate internal cycle; the other inserts that cycle's return
 route into the boundary path. If the closed return has sign
 \(\ell\), the affected boundary signs differ by \(\ell\).
 \item If \(k=2\), the cavity belongs entirely to internal cycles,
 and \(S_0=S_1\).
\end{enumerate}
In the second case \(S_0=S_1\) if and only if \(\ell=1\).
\end{proposition}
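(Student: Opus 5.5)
The proof is a case analysis on $k$, built on Lemma~\ref{lem:decomposition} applied to the punctured region $B\setminus\{v\}$. With both transitions at $v$ removed, the exposed routing is still injective, with unique successors and predecessors. Call the two cavity inputs $i_1,i_2$ and the two cavity outputs $o_1,o_2$. From each $o_r$ I follow successors; each route ends either at an exterior output of $B$ or at some $i_s$, and distinct routes cannot merge. I also follow predecessors back from each $i_s$; such a route either reaches an exterior input of $B$ or arrives at some $o_r$. A return from $o_r$ to $i_s$ is the same route seen from both ends, so the number of cavity inputs reached backward from boundary inputs is $2-k$, matching the $2-k$ forward routes that exit. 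This gives the bookkeeping: $k$ returns, $2-k$ \emph{tails in}, $2-k$ \emph{tails out}. A filling of $v$ is one of the two perfect matchings of $\{i_1,i_2\}$ with $\{o_1,o_2\}$, namely straight continuation or axis change. The two fillings are complementary: each input goes to a different output under the two fillings.

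\textbf{Case $k=0$.} Two boundary inputs enter through $i_1,i_2$, and two routes from $o_1,o_2$ exit to distinct boundary outputs. The inputs are distinct because predecessors are unique. Each filling joins the incoming tails to the outgoing tails by a bijection, and the two bijections differ by the transposition. So the fillings exchange the outgoing tails. No internal cycle passes through $v$, since every state at $v$ lies on one of these two boundary paths.

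\textbf{Case $k=2$.} Both $o_r$ return to cavity inputs, so no boundary route touches $v$. Every passage at $v$ closes up into cycles under either filling: one cycle or two, depending on the matching. The boundary paths, and the marker multiplicities along them, are literally the same states in both fillings, hence $S_0=S_1$.

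\textbf{Case $k=1$.} Let $o_r\to i_s$ be the return route $\rho$. One boundary input enters at $i_{s'}$ and one route from $o_{r'}$ exits, where $s'\neq s$ and $r'\neq r$. In the filling that matches $i_s\mapsto o_r$, the route $\rho$ closes into an internal cycle, and $i_{s'}\mapsto o_{r'}$ gives a boundary path avoiding $\rho$. In the complementary filling ($i_s\mapsto o_{r'}$, $i_{s'}\mapsto o_r$), the boundary path runs in, through $v$, along $\rho$, back through $v$ and out. It is the same path with $\rho$ spliced in.

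The sign of a boundary path is $(-1)^{\#\text{marked traversals}}$. This count is additive over the concatenated pieces, and the passage through $v$ itself crosses no marked edge. Hence the two affected signs differ by the factor $\ell=(-1)^{\#\text{marked traversals of }\rho}$, which is exactly the sign of the closed return. All other boundary paths are unchanged, and the unsigned boundary permutation is the same in both fillings. So $S_0$ and $S_1$ differ only in one entry, by multiplication by $\ell$, and $S_0=S_1$ if and only if $\ell=1$.

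The step needing the most care is the counting in the first paragraph. It must show that exactly $2-k$ boundary inputs feed the cavity, and that in case $k=1$ the incoming tail enters at the input not hit by the return. This follows from the backward-route matching above together with the uniqueness of predecessors in Lemma~\ref{lem:decomposition}.
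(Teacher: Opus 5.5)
Your proposal is correct and follows the same three-case argument as the paper. You spell out the terminal bookkeeping that the paper leaves terse: backward tracing from the cavity inputs, the count of $2-k$ incoming and outgoing tails, and additivity of marked traversals under splicing. One small slip: in case $k=0$ the two boundary inputs feeding $i_1,i_2$ are distinct because successors are unique (the forward route from a boundary input stops at the first cavity input it meets), not because predecessors are unique; since the routing is a bijection on directed states both facts are available, so nothing breaks.
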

\begin{proof}
The exposed routing pairs its remaining input and output terminals.
Two different traced routes cannot meet a directed state. If no
cavity output returns, both cavity inputs must have predecessors on
two distinct boundary paths; bijectivity also makes their two
exterior outputs distinct. Changing the filling exchanges the
connections between these two input and output routes.

If one output returns to an input, one filling connects that input
back to the same output and hence closes a cycle. Its other passage
connects the remaining boundary input and output routes. The other
filling joins these pieces into one boundary route, multiplying its
sign by the closed return's sign. If both outputs return, every
cavity terminal is already connected within this four-terminal
system. No boundary path reaches it, so changing its internal cycle
decomposition leaves the boundary map unchanged.
\end{proof}

\begin{corollary}[Four arms away from the endpoints]\label{cor:four-arms}
Suppose neither endpoint of \(\alpha\) lies in \(B\). If
\(S_0\ne S_1\), the punctured routing has \(k=0\): two incoming
and two outgoing routes join the cavity to \(\partial B\), disjoint
as directed states. Thus \(k\ge1\) certifies \(S_0=S_1\) without
reading the bit at \(v\).
\end{corollary}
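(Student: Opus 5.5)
By Proposition~\ref{prop:cavity}, it suffices to show that when neither endpoint of \(\alpha\) lies in \(B\), the case \(k=1\) always has \(\ell=1\). Indeed, the case \(k=2\) already gives \(S_0=S_1\), and the case \(k=1\) gives \(S_0=S_1\) exactly when \(\ell=1\). So \(S_0\ne S_1\) can only occur in the case \(k=0\). The description of that case in part (1) of the proposition is precisely the four-arm statement: two distinct boundary inputs reach the two cavity inputs, and the two cavity outputs reach distinct boundary outputs. These routes are disjoint as directed states by uniqueness of successors and predecessors (Lemma~\ref{lem:decomposition}). The contrapositive, that \(k\ge1\) forces \(S_0=S_1\), follows immediately. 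This certificate is determined by the punctured routing, which never reads \(\omega_v\).

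The main step is to show that the closed return in case \(k=1\) has even intersection with \(\alpha\). I would close the return route through the cavity site \(v\), using the filling that forms the internal cycle, and obtain a closed polygonal curve \(\gamma\) lying in \(B\). The sign \(\ell\) equals \((-1)^{I(\alpha,\gamma)}\), counted with the marked-edge multiplicities.

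Since neither endpoint of \(\alpha\) lies in \(B\), the part \(\alpha\cap B\) is a union of arcs, each with both endpoints on \(\partial B\). After a small perturbation, I would apply the observation following Lemma~\ref{lem:loop-sign}, namely that a closed curve meets an open curve with endpoints on the disk boundary an even number of times. This can be done directly through mod-two winding numbers: \(\wind(\gamma,\cdot)\bmod 2\) vanishes near \(\partial B\), since \(\gamma\) stays inside the disk \(B\), and it changes at each crossing of \(\alpha\). Hence each arc of \(\alpha\cap B\) meets \(\gamma\) an even number of times, and so \(\ell=1\).

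The main obstacle is bookkeeping. Marked edges lying inside \(B\) must be exactly the edges crossed by arcs of \(\alpha\) that run from \(\partial B\) to \(\partial B\). Crossings at corners and at vertices must also be resolved transversally, as in Lemma~\ref{lem:blocking}. I would take \(B\) to be a union of unit squares centered at its vertices and clip \(\alpha\) to it. Any arc that leaves and re-enters \(B\) then still contributes arcs with both ends on \(\partial B\), so the parity argument applies to each piece.
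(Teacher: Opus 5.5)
Your proposal is correct and follows essentially the paper's proof. The closed return lies in \(B\), so its mod-two winding number vanishes off \(B\), which forces an even intersection with \(\alpha\); hence \(\ell=1\), and Proposition~\ref{prop:cavity} finishes the argument. Clipping \(\alpha\) into arcs is unnecessary, since the paper evaluates the winding number directly at the two endpoints of \(\alpha\), which lie outside \(B\), but the clipping does no harm.
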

\begin{proof}
Every closed curve contained in \(B\) has even intersection with
\(\alpha\). Indeed, both endpoints of the connector lie outside
the disk and hence have the same, zero, winding number with respect
to that curve. In the \(k=1\) case the closed return therefore has
sign \(+1\). Apply Proposition~\ref{prop:cavity}.
\end{proof}

These are four \emph{routed} arms, not independent percolation arms.
Two of them may pass through the same physical site on its different
incoming states and consequently use the same Bernoulli variable.

\begin{proposition}[Exact erasure]\label{prop:erasure}
Suppose an exposed configuration in \(B\setminus\{v\}\) certifies
\(S_0=S_1\). For every fixed exterior environment in a larger finite
domain \(D\), the boundary-parity indicator is independent of
\(\omega_v\). The original bit can therefore be summed out exactly,
also with a complex site parameter \(z\):
\begin{equation}\label{eq:erasure}
 (1-z)\ind_{H(D,\alpha)}\big|_{\omega_v=0}
 +z\ind_{H(D,\alpha)}\big|_{\omega_v=1}
 =\ind_{H(D,\alpha)}\big|_{\omega_v=0}.
\end{equation}
\end{proposition}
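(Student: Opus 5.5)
The plan is to show that the signed boundary scattering matrix of the large domain \(D\) depends on the configuration inside \(B\) only through the signed boundary scattering matrix of \(B\), together with the exterior environment. The erasure formula then follows from a one-line computation.

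\textbf{Step 1 (gluing).} Fix the exterior bits in \(D\setminus B\) and the exposed bits in \(B\setminus\{v\}\). By Lemma~\ref{lem:decomposition} applied to \(B\), the routing inside \(B\) consists of boundary-to-boundary paths of \(B\) and internal cycles of \(B\). A boundary path of \(D\) therefore decomposes into maximal segments: segments in \(D\setminus B\), and complete excursions through \(B\). Each excursion enters at an incoming leg of \(\partial B\) and leaves at the outgoing leg prescribed by the connection in \(S_0\) or \(S_1\). The path cannot enter an internal cycle of \(B\), by uniqueness of predecessors.

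The sign of a boundary path of \(D\) is the product of the signs of its marked traversals. This product factors as the signs of its exterior segments times the signs of its \(B\)-excursions. Each excursion sign is exactly the corresponding entry of the signed scattering matrix of \(B\). Marked edges on \(\partial B\) itself should be assigned consistently to one side; I would cut \(B\) along the unit-square boundary, as in Section~\ref{sec:model}, so that no marked edge is split.

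\textbf{Step 2 (equality of outcomes).} Consequently, the full set of boundary paths of \(D\), with their signs, is a deterministic function of the exterior routing and of the signed boundary matrix of \(B\). Equivalently, \(S_D\) is obtained by iterating the composition of the exterior connections with \(S_B\) until the path exits \(\partial D\). Internal cycles of \(D\) that are formed by gluing carry no condition, so it does not matter how they change.

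Since \(S_0=S_1\) by hypothesis, the two fillings give identical signed boundary maps of \(D\). Hence \(\ind_{H(D,\alpha)}\) takes the same value for \(\omega_v=0\) and \(\omega_v=1\). Indeed, \(H(D,\alpha)\) is equivalent to \(S_D\one=\one\).

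\textbf{Step 3 (erasure formula).} Substituting equal values into the left side of \eqref{eq:erasure} gives \((1-z)I+zI=I\) for any complex \(z\). Under the product measure, summing out \(\omega_v\) against the weights \((1-z,z)\) therefore reduces to the \(\omega_v=0\) value, with no further dependence.

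\textbf{Main obstacle.} The delicate point is Step 1's claim that signs factor over excursions. This needs the connector's marked edges to be counted on exactly one side of the cut. It also needs that an excursion's sign in \(B\) is computed with the same marking as in \(D\), since the connector \(\alpha\) of \(D\) restricted to \(B\) defines the marks of \(B\). I would check this by defining marks edgewise: an edge is marked when its intersection multiplicity with \(\alpha\) is odd. This definition is intrinsic to edges and hence compatible with restriction. It makes the factorization exact, with no appeal to winding numbers.
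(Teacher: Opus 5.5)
Your proof is correct and follows the paper's argument. Replace each excursion through \(B\) by its signed boundary connection. Note that \(S_0=S_1\) makes the composed global signed boundary map identical for both fillings, with internal cycles imposing no condition, and conclude from \((1-z)+z=1\). One small correction: cutting along the unit-square boundary does split the edges crossing \(\partial B\) at their midpoints, so your statement that no marked edge is split is inaccurate. This is harmless, because attributing such a marked edge's sign to either half multiplies \(S_0\) and \(S_1\) by the same diagonal sign matrix, which leaves the hypothesis \(S_0=S_1\) unchanged.
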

\begin{proof}
Replace each complete excursion through \(B\) by its signed boundary
connection. Equality of the two signed scattering matrices makes
this replacement identical for both fillings. Composing with the
fixed exterior routing gives the same global boundary map, including
paths making several excursions into \(B\). Excursions that close
into internal cycles impose no boundary condition. The two indicators
are equal, and \eqref{eq:erasure} follows from \((1-z)+z=1\).
\end{proof}

Erasure removes the weight of the unread bit only. Every bit exposed
to obtain the certificate retains its own original probability factor.
In particular, selecting a favorable surrounding configuration does
not make that configuration a probability-one event.

\subsection{Rank-one response and its exact strength}

\begin{theorem}[Single-site reflection]\label{thm:reflection}
For arbitrary fixed edge signs and any cavity,
\begin{equation}\label{eq:rank-one}
 \rank(S_1-S_0)\le1.
\end{equation}
If the difference is nonzero, there is a real unit vector \(u\),
supported on at most two output coordinates, such that
\[
 S_1=(I-2uu^\top)S_0.
\]
In this nontrivial case, for a Bernoulli bit of parameter \(p\),
its averaged map satisfies
\begin{equation}\label{eq:reflection-average}
 \overline S=(I-2puu^\top)S_0,\qquad
 \overline S^\top\overline S
 =I-4p(1-p)vv^\top,\qquad v=S_0^\top u.
\end{equation}
It has one affected singular value \(|1-2p|\), with all other
singular values equal to one.
\end{theorem}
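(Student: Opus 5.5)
The plan is to reduce the statement to Proposition~\ref{prop:cavity}: the three return classes already describe how the two fillings can differ, and the rank and reflection claims follow by inspecting each class. The signed boundary matrices \(S_0,S_1\) act on the one-hot boundary coordinates. Each column of \(S_0\) is \(\pm e_{\text{output}}\), read off from the unique boundary path given by Lemma~\ref{lem:decomposition}.

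\textbf{Case \(k=2\).} Proposition~\ref{prop:cavity} gives \(S_1=S_0\), so the difference is zero and there is nothing more to prove.

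\textbf{Case \(k=1\).} Only one boundary path is affected, and its sign is multiplied by the closed return's sign \(\ell\). If \(\ell=1\) the two maps are equal. If \(\ell=-1\), then \(S_1=(I-2e_oe_o^\top)S_0\), where \(o\) is the output of the affected path, so we may take \(u=e_o\), supported on one coordinate.

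\textbf{Case \(k=0\).} Two boundary inputs \(i,i'\) reach the cavity and two outputs \(o,o'\) leave it. Under the first filling, \(i\mapsto\sigma o\) and \(i'\mapsto\sigma' o'\), where \(\sigma=\epsilon_i\tau_o\) and \(\sigma'=\epsilon_{i'}\tau_{o'}\) factor into the sign accumulated on the incoming arm and the sign on the outgoing tail. The other filling exchanges the tails, giving \(i\mapsto\epsilon_i\tau_{o'}o'\) and \(i'\mapsto\epsilon_{i'}\tau_o o\). Every other column is unchanged. The key point is that the cavity passages carry no marker, since markers sit on edges. Thus \(S_1=PS_0\), where \(P\) is the signed transposition swapping \(o\) and \(o'\) with sign \(\tau_o\tau_{o'}\) and fixing every other coordinate. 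Such a \(P\) equals \(I-2uu^\top\) with \(u=(e_o-\tau_o\tau_{o'}e_{o'})/\sqrt2\). To check this, \(P\) is symmetric and orthogonal with eigenvalue \(-1\) only on \(u\). This is the step needing the most care: the sign bookkeeping that factors each path sign into an arm sign and a tail sign. I would make that factorization explicit using the sign convention of Section~\ref{sec:model}, under which the sign is the product of markers along the path.

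\textbf{Rank bound and averaged map.} In every case \(S_1-S_0=-2uu^\top S_0\) has rank at most one, which proves \eqref{eq:rank-one}. For the average, linearity in the bit gives \(\overline S=(1-p)S_0+pS_1=(I-2puu^\top)S_0\). Since \(S_0\) is a signed permutation, it is orthogonal. Hence
\[
\overline S^\top\overline S=S_0^\top(I-2puu^\top)^2S_0,
\]
and with \(u^\top u=1\) we have \((I-2puu^\top)^2=I-(4p-4p^2)uu^\top\). This yields \(I-4p(1-p)vv^\top\) with \(v=S_0^\top u\), which is again a unit vector. Its eigenvalues are \(1-4p(1-p)=(1-2p)^2\) on \(v\) and \(1\) on \(v^\perp\). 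This gives one singular value \(|1-2p|\), with all others equal to one.
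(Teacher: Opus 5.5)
Your proposal is correct and is essentially the paper's proof. You split on \(k\) via Proposition~\ref{prop:cavity}: for \(k=0\) the relative map is the signed transposition \(D_bPD_b\), which is your \(P\) with sign \(\tau_o\tau_{o'}\); an odd \(k=1\) return gives a single sign flip; every other case gives the identity. You then use the same averaging and Gram-matrix computation, and your arm/tail factorization \(\epsilon_i\tau_o\) is just the paper's prefix and suffix sign matrices \(D_a,D_b\) written out explicitly.
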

\begin{proof}
In the \(k=0\) case, restrict to the two affected input and output
coordinates. Prefix and suffix signs give diagonal sign matrices
\(D_a,D_b\) such that, after relabeling, the two maps are
\(D_b I D_a\) and \(D_b P D_a\), where
\(P=\left(\begin{smallmatrix}0&1\\1&0\end{smallmatrix}\right)\).
Their relative map is \(D_bPD_b\), a symmetric orthogonal
reflection with eigenvalues \(+1,-1\).
If \(k=1\) and the closed return is odd, the relative map flips
one output sign. In every other case it is the identity, by
Proposition~\ref{prop:cavity}. This proves the rank and reflection
claims. Averaging the reflection and expanding its Gram matrix gives
\eqref{eq:reflection-average} and the singular values.
\end{proof}

At \(p=2/5\), the affected singular value is \(1/5\). This is a
statement about one mode of an averaged amplitude map. It is not a
probability that a trajectory is removed, and the unaffected modes
remain undamped.

\section{One internal cycle with several independent contacts}\label{sec:contacts}

Single-site erasure does not imply that a collection of sites may be
erased simultaneously. Two changes can create a response even when
each change alone is invisible in a reference environment. For one
reference cycle, this interaction has an exact finite-dimensional
description.

\subsection{Hypotheses and the transfer formula}

Condition on all bits except those at distinct candidate sites
\(v_1,\ldots,v_m\), where \(m\ge1\). Choose their reference
values without reading these free bits, and consider the resulting
reference routing in a finite domain. Assume the following:
\begin{enumerate}
 \item One internal cycle \(C\) visits each candidate in exactly
 one of its two directed states, in the cyclic order
 \(v_1,\ldots,v_m\).
 \item The other state at \(v_i\) belongs to a boundary path
 \(\gamma_i\). These \(m\) paths are distinct, and each contains
 no other candidate state.
 \item All routing outside the candidate transitions is fixed.
\end{enumerate}
Call a site inactive when it takes its reference value and active
when its two successors are exchanged. Let \(\rho_i\) be its
inactive probability and \(s_i=1-\rho_i\) its active probability.
For the original parameter \(p\),
\[
 \rho_i=p\quad\hbox{at a reference mirror},\qquad
 \rho_i=1-p\quad\hbox{at a reference vacancy}.
\]
The candidate bits are conditionally independent. The probability of
the exposed background is separate and is not included in the
conditional formulas below.

Let \(\ell\in\{\pm1\}\) be the sign of \(C\). A sign gauge on
the cycle makes every inter-contact transport positive except the
closing transport from \(m\) to \(1\), whose sign is \(\ell\).
In this gauge, let \(a_i\) be the external prefix sign and \(b_i\)
the external suffix sign at contact \(i\). Thus the reference path
\(\gamma_i\) has sign \(a_ib_i\). An active contact sends external
input amplitude one into the cycle with sign \(a_i\), and requires
cycle input sign \(b_i\) for its external output to equal one.

Index the standard basis of \(\R^2\) by \(+1,-1\), writing
\(e_+,e_-\), and set
\begin{align}
 J_+&=I,&
 J_-&=\begin{pmatrix}0&1\\1&0\end{pmatrix},\nonumber\\
 M_i&=\rho_i\ind_{\{a_i=b_i\}}I+s_i e_{a_i}e_{b_i}^\top,
 & Q&=J_\ell M_m\cdots M_1.\label{eq:contact-transfer}
\end{align}

\begin{theorem}[Exact one-cycle contact formula]\label{thm:contacts}
Under these hypotheses, the conditional probability that all affected
boundary paths have even parity is
\begin{align}
 \cR
 &=\tr Q-\ell\prod_{i=1}^m
                  \rho_i\ind_{\{a_i=b_i\}}\label{eq:trace-correction}\\
 &=\tr Q-\det Q=1-\det(I-Q).\label{eq:trace-det}
\end{align}
The full conditional boundary-parity probability is \(\cR\)
multiplied by the fixed success indicator of the unaffected boundary
paths. The identities remain valid as polynomial identities for
complex \(\rho_i\).
\end{theorem}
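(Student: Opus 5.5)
The plan is to expand the conditional probability as a sum over the set \(A\subseteq\{1,\ldots,m\}\) of active contacts, and to recognize this sum as a trace over the sign carried along \(C\). The candidate bits are conditionally independent, so the configuration with active set \(A\) has weight \(\prod_{i\in A}s_i\prod_{i\notin A}\rho_i\). For that configuration I first determine the affected boundary paths and their signs. This uses only hypotheses (1)--(3) and the local exchange rule at an active site, in the spirit of Proposition~\ref{prop:cavity}.

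\emph{Routing for a fixed active set.} Fix \(A\).
\begin{itemize}
\item If \(A=\emptyset\), the reference routing is unchanged. The cycle \(C\) remains internal and imposes no condition. Each \(\gamma_i\) has sign \(a_ib_i\), so the success indicator is \(\prod_i\ind_{\{a_i=b_i\}}\).
\item If \(A=\{i_1<\cdots<i_k\}\) with \(k\ge1\), exchanging successors at an active contact sends the prefix of \(\gamma_{i_j}\) into \(C\). The strand then follows \(C\) past the inactive contacts, which keep their reference transitions. It exits at the next active contact \(i_{j+1}\) in cyclic order, along the suffix of \(\gamma_{i_{j+1}}\).
\end{itemize}
Every arc of \(C\) is thereby absorbed into a boundary path, so no cycle remains. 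In the chosen gauge the new path from \(i_j\) to \(i_{j+1}\) has sign \(a_{i_j}\,b_{i_{j+1}}\), multiplied by \(\ell\) exactly when the arc passes the closing transport \(m\to1\). The inactive \(\gamma_i\) keep sign \(a_ib_i\). Success is the product of the indicators that all these signs equal \(+1\).

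\emph{Trace identity.} Read \(Q=J_\ell M_m\cdots M_1\) as a transfer matrix whose state is the sign currently carried along \(C\).
\begin{itemize}
\item An inactive factor \(\rho_i\ind_{\{a_i=b_i\}}I\) passes the state through and enforces \(a_ib_i=1\).
\item An active factor \(s_ie_{a_i}e_{b_i}^\top\) requires the arriving sign to equal \(b_i\), so that the path just closed has sign \(+1\), and emits the new sign \(a_i\).
\item The factor \(J_\ell\) applies the closing sign.
\end{itemize}
Expanding \(\tr Q\) multilinearly over the two summands of each \(M_i\) gives a sum over \(A\). For \(A\ne\emptyset\), the trace over the cyclically consistent state reproduces exactly the weighted success indicator above. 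The only discrepancy is the \(A=\emptyset\) term, which contributes \(\tr(J_\ell)\prod_i\rho_i\ind_{\{a_i=b_i\}}\), with \(\tr J_+=2\) and \(\tr J_-=0\). The correct contribution is \(\prod_i\rho_i\ind_{\{a_i=b_i\}}\), since the intact cycle has weight one. Because \(\tr J_\ell-1=\ell\), subtracting \(\ell\prod_i\rho_i\ind_{\{a_i=b_i\}}\) yields \eqref{eq:trace-correction}.

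\emph{Determinant form.} For \eqref{eq:trace-det} I compute \(\det M_i\).
\begin{itemize}
\item If \(a_i=b_i\), then \(M_i=\rho_iI+s_ie_{a_i}e_{a_i}^\top\) and \(\det M_i=\rho_i(\rho_i+s_i)=\rho_i\).
\item If \(a_i\ne b_i\), then \(M_i\) has rank one and \(\det M_i=0\).
\end{itemize}
Together with \(\det J_\ell=\ell\), this gives \(\det Q=\ell\prod_i\rho_i\ind_{\{a_i=b_i\}}\). The identity \(\tr Q-\det Q=1-\det(I-Q)\) is the characteristic polynomial of a \(2\times2\) matrix evaluated at \(1\).

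\emph{Remaining assertions.} The factorization with the unaffected boundary paths holds because those paths contain no candidate state and are therefore fixed. Validity for complex \(\rho_i\), with \(s_i=1-\rho_i\), holds because both sides are polynomials that agree as formal multilinear expansions; the only step that used \(\rho_i+s_i=1\) is the determinant computation, which is also polynomial. The main obstacle is the sign bookkeeping in the routing step: checking that the gauge convention makes each new path's sign exactly \(a_{i_j}\cdot(\text{arc transport})\cdot b_{i_{j+1}}\), and that \(\ell\) enters once per wrap, including when \(k=1\) and the single arc is the whole cycle. I would verify this first by tracing the directed states at one active contact.
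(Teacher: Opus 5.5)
Your proposal is correct and follows essentially the same route as the paper: you expand over the active set \(A\), read \(\tr Q\) as a sign-transfer trace around \(C\) that reproduces the weighted success indicator for \(A\neq\varnothing\), correct the all-inactive term using \(\tr J_\ell-1=\ell\), and obtain \(\det Q=\ell\prod_i\rho_i\ind_{\{a_i=b_i\}}\) from \(\det M_i\) and \(\det J_\ell\) before applying the \(2\times2\) identity. The only cosmetic difference is that you absorb the all-inactive case with some \(a_i\ne b_i\) into the single correction term, whereas the paper treats that case separately.
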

\begin{proof}
First fix the active subset \(A\). An inactive contact retains its
reference external path, requiring \(a_i=b_i\), while transporting
the cycle sign unchanged. An active contact forces the cycle sign
just before it to be \(b_i\) and the sign just after it to be
\(a_i\). These are exactly the two local matrices whose weighted
sum is \(M_i\).

If \(A\ne\varnothing\), one active contact fixes the sign on the
cycle, after which every other sign is determined. The trace with
the closing matrix \(J_\ell\) is therefore either zero or one.
It is one exactly when every newly connected boundary path has
positive sign. Equivalently, an active input at \(i\) follows the
reference cycle to the next active contact \(j\), exiting with sign
\(a_ib_j\) times the intervening cycle transport; the trace imposes
precisely that sign condition.

For \(A=\varnothing\), the reference internal cycle is unrestricted.
If some \(a_i\ne b_i\), both the correct indicator and the trace
vanish. If all are equal, the correct indicator is one, whereas the
trace of \(J_\ell\) is two for \(\ell=1\) and zero for
\(\ell=-1\). Subtracting \(\ell\) times the all-inactive weight
corrects exactly this discrepancy. Summing independent candidate
weights proves \eqref{eq:trace-correction}.

If \(a_i=b_i\), the eigenvalues of \(M_i\) are \(1,\rho_i\);
otherwise \(M_i\) has rank at most one and determinant zero.
Thus
\[
 \det Q=\det J_\ell\prod_i\det M_i
 =\ell\prod_i\rho_i\ind_{\{a_i=b_i\}}.
\]
For a \(2\times2\) matrix,
\(\det(I-Q)=1-\tr Q+\det Q\), proving
\eqref{eq:trace-det}. All operations are polynomial, so the same
identities hold at complex parameters.
\end{proof}

The correction in \eqref{eq:trace-correction} is the same boundary-only
normalization principle as in Theorem~\ref{thm:tensor}: an internal odd cycle must not cause the
boundary event to fail.

\begin{corollary}[Even reference boundary paths]\label{cor:contact-cases}
Suppose \(a_i=b_i\) for all \(i\). If \(\ell=1\), put
\[
 P_+=\prod_{i:a_i=1}\rho_i,\qquad
 P_-=\prod_{i:a_i=-1}\rho_i,
\]
with empty products equal to one. Then
\begin{equation}\label{eq:two-phase}
 \cR=P_++P_--P_+P_-,\qquad
 1-\cR=(1-P_+)(1-P_-).
\end{equation}
If \(\ell=-1\), then
\begin{equation}\label{eq:odd-loop-contacts}
 \cR=\prod_i\rho_i.
\end{equation}
If at least one reference boundary path is odd, then
\(\det Q=0\), \(\rank Q\le1\), and \(\cR=\tr Q\).
\end{corollary}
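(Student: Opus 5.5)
The plan is to specialize Theorem~\ref{thm:contacts} directly. Throughout, $M_i$ and $Q$ are as in \eqref{eq:contact-transfer}, and I compute $\tr Q$ and $\det Q$ in each case.

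\emph{Even reference paths, $a_i=b_i$ for all $i$.} Each $M_i$ is then diagonal in the basis $e_+,e_-$. If $a_i=1$, then $M_i=\rho_iI+s_ie_+e_+^\top=\diag(1,\rho_i)$, because $\rho_i+s_i=1$. If $a_i=-1$, then $M_i=\diag(\rho_i,1)$. These diagonal matrices commute, so
\[
 M_m\cdots M_1=\diag(P_-,P_+).
\]

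For $\ell=1$ we have $J_\ell=I$. Hence
\[
 \tr Q=P_++P_-,\qquad \det Q=P_+P_-=\prod_i\rho_i .
\]
By \eqref{eq:trace-det}, $\cR=\tr Q-\det Q=P_++P_--P_+P_-$. The factored form $1-\cR=(1-P_+)(1-P_-)$ is just the expansion of that product, which gives \eqref{eq:two-phase}.

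For $\ell=-1$, the closing matrix is $J_-$, which swaps the two coordinates. Multiplying a diagonal matrix by $J_-$ produces a matrix with zero diagonal, so $\tr Q=0$. Also $\det Q=\det J_-\cdot P_+P_-=-\prod_i\rho_i$. Therefore $\cR=\tr Q-\det Q=\prod_i\rho_i$, which is \eqref{eq:odd-loop-contacts}.

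\emph{At least one odd reference path, some $a_i\ne b_i$.} Then $M_i=s_ie_{a_i}e_{b_i}^\top$, which has rank at most one. Consequently the product $Q$ has rank at most one and $\det Q=0$. This agrees with the general formula $\det Q=\ell\prod_i\rho_i\ind_{\{a_i=b_i\}}$, whose product vanishes here. Then \eqref{eq:trace-det} gives $\cR=\tr Q$.

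The only point needing care is getting the diagonal entries of $M_i$ right, namely which of $e_\pm$ receives the extra weight $s_i$, and keeping the placement of $J_\ell$ consistent with the ordering in \eqref{eq:contact-transfer}. Neither issue affects the trace or the determinant, so I expect no real obstacle beyond this bookkeeping.
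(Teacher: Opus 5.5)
Your proposal is correct and is the paper's argument, with the bookkeeping written out. Like the paper, you specialize \(\cR=\tr Q-\det Q\) from Theorem~\ref{thm:contacts}: when all \(a_i=b_i\) the factors \(M_i\) are diagonal with product \(\diag(P_-,P_+)\), the swap \(J_-\) gives zero trace and determinant \(-\prod_i\rho_i\), and a single factor with \(a_i\ne b_i\) has rank at most one, forcing \(\rank Q\le1\) and \(\det Q=0\).
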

\begin{proof}
When \(a_i=b_i\), all \(M_i\) are diagonal. For \(\ell=1\),
their product has diagonal entries \(P_-,P_+\), giving
\eqref{eq:two-phase}. For \(\ell=-1\), multiplication by
\(J_-\) makes the trace zero and the determinant
\(-\prod_i\rho_i\). If some \(a_i\ne b_i\), the corresponding
factor has rank at most one and determinant zero.
\end{proof}

\subsection{Two contacts and cooperative response}

\begin{corollary}[An effective two-contact exchange]\label{cor:exchange}
Suppose \(m=2\) and \(\ell=1\). If no contact or only one contact
is active, the signed boundary scattering equals the reference
scattering. If both are active, the two external tails are exchanged.
Thus the exact conditional scattering law is an unchanged routing
with probability \(1-s_1s_2\), and the signed exchanged routing with
probability \(s_1s_2\).
\end{corollary}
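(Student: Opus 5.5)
The plan is a direct routing argument on the four subsets of active contacts, using the same bookkeeping as in the proof of Theorem~\ref{thm:contacts}. With $m=2$ the reference cycle $C$ is split by the two contact states into two arcs: an arc $C_{12}$ from $v_1$ to $v_2$ and an arc $C_{21}$ from $v_2$ back to $v_1$. In the chosen gauge these carry transport signs $+1$ and $\ell=1$. Each boundary path $\gamma_i$ has an external prefix, with sign $a_i$, and an external suffix, with sign $b_i$.

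\textbf{No active contact.} The routing is the reference routing, so the signed boundary scattering is the reference one by definition.

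\textbf{Exactly one active contact, say $v_1$.} The two successors at $v_1$ are exchanged. The prefix of $\gamma_1$ now continues into $C$ at $v_1$, runs along $C_{12}$, passes $v_2$ inertly (still in its reference state), runs along $C_{21}$, and returns to $v_1$. The exchanged successor then sends it onto the suffix of $\gamma_1$. So the boundary route $\gamma_1$ absorbs the whole cycle $C$ as an inserted excursion, and the internal cycle disappears. This is exactly the situation of Proposition~\ref{prop:cavity}(2) with closed return sign $\ell=1$:
\begin{itemize}
\item the boundary input and output pairing is unchanged;
\item the sign of $\gamma_1$ is multiplied by $\ell=1$;
\item $\gamma_2$ is untouched.
\end{itemize}
Hence the signed scattering equals the reference scattering. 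The case where only $v_2$ is active is symmetric.

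\textbf{Both contacts active.} The prefix of $\gamma_1$ enters $C$ at $v_1$ and follows $C_{12}$ to $v_2$. There the exchanged successor sends it out along the suffix of $\gamma_2$. Symmetrically, the prefix of $\gamma_2$ follows $C_{21}$ to $v_1$ and exits along the suffix of $\gamma_1$. So the two external tails are exchanged, and the cycle is consumed.

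The new signs are:
\begin{itemize}
\item $a_1 b_2$ for the route from the input of $\gamma_1$ to the output of $\gamma_2$, because $C_{12}$ has transport $+1$;
\item $a_2\ell b_1 = a_2 b_1$ for the route from the input of $\gamma_2$ to the output of $\gamma_1$.
\end{itemize}
This is the signed exchanged routing. Nothing else in the domain changes, by hypothesis~(3).

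\textbf{Probabilities.} The candidate bits are conditionally independent. The both-active event has probability $s_1s_2$, and the other three events together have probability $1-s_1s_2$. Since all three of those give the same signed map, the conditional law is as stated.

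As a consistency check, I would verify that the resulting success probability agrees with Theorem~\ref{thm:contacts}: it should be $\mathcal R=\tr Q-\det Q$ with $J_\ell=I$. In the example of \eqref{eq:cooperative-probability} this gives $1-s_1s_2=1-(1-p)^2$.

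The main obstacle is the one-active case. There I must check carefully that the traversal really passes the inactive contact along the cycle, rather than leaving onto the boundary path. This holds because the inactive site retains its reference successor for the cycle state. I must also check that the inserted loop contributes exactly the sign $\ell$ computed in the gauge: the gauge puts all of the cycle's sign on the closing transport, so the full loop product is $\ell$.
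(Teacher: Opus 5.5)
Your proposal is correct and follows essentially the same argument as the paper's proof. With one active contact, the affected boundary path absorbs the whole reference cycle, and \(\ell=1\) leaves its connection and sign unchanged; with both contacts active, each input follows one arc and exits along the other contact's suffix, and independence gives the weight \(s_1s_2\). Your explicit signs \(a_1b_2\) and \(a_2b_1\) for the exchanged routes, and your check against \(\tr Q-\det Q\), agree with the paper's gauge and only add detail the paper leaves implicit.
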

\begin{proof}
With one active contact, the affected path traverses the whole
reference cycle before returning to its own suffix. The inserted
sign is \(\ell=1\), so its boundary connection and sign are
unchanged. With both contacts active, each input exits at the other
contact. Independence gives the product \(s_1s_2\).
\end{proof}

The result is a four-terminal exchange law. It does not identify a
new square-lattice model without a further geometric construction.
For several such gadgets, independence of the effective variables
requires disjoint free physical bits and a surrounding geometry
determined without those bits.

\end{document}